\theoremstyle{plain}
\theoremstyle{remark}
\newtheorem{Remark}{Remark}
\theoremstyle{definition}
\newtheorem{defi}{Definition}
\newtheorem{proof}{Proof}
\newtheorem{ex}{Example}
\newcommand{\rz}{\mathbb{R}}
\newcommand{\N}{\mathbb{N}}
\begin{document}

\begin{frontmatter}

\title{COMPOSITION  LAW OF  CARDINAL ORDERING PERMUTATIONS }

\author[addr1,addr2]{Jesús San Martín}
\ead{jsm@dfmf.uned.es}
\author[addr1]{Mª Jose Moscoso}
\ead{mariajose.moscoso@upm.es}

\address[addr1]{Departamento de Matemática Aplicada, E.U.I.T.I. Universidad
Politécnica de Madrid. 28012-Madrid}
\address[addr2]{Departamento de Física Matemática y de Fluidos, Facultad
de Ciencias. Universidad Nacional de Educación a Distancia. 28040-Madrid,
SPAIN.}

\author[addr3]{A. González Gómez}
\ead{antonia.gonzalez@upm.es}
\address[addr3]{Departamento de Matemática Aplicada a los Recursos Naturales,
E.T. Superior de Ingenieros de Montes. Universidad Politécnica de
Madrid. 28040-Madrid, SPAIN.}

\begin{abstract}
In this paper the theorems that determine composition laws for both cardinal ordering permutations and their inverses are proven. So, the relative positions of points in a $hs-$periodic orbit become completely known as well as in which order those points are visited. No matter how a $hs$-periodic orbit emerges, be it through a period doubling cascade $(s=2^n)$ of the $h-$periodic orbit, or as a primary window (like the saddle-node bifurcation cascade with $h=2^n$), or as a secondary window (the birth of a $s-$periodic window inside the $h-$periodic one). Certainly, period doubling cascade orbits are  particular cases with $h=2$ and $s=2^n.$   Both composition laws are also shown in algorithmic way for their easy use.
\end{abstract}

\begin{keyword}
Cardinal ordering permutations. Composition law. Preceding sequence.
\end{keyword}

\end{frontmatter}

\section{Introduction}

As Hao Bai-Lin says in his book ~\cite{Hao} ``Symbolic dynamics  might be the only rigorous way to define  chaos...''. Certainly, this nice theory brings order into chaos labelling and ordering periodic orbits and therefore giving us the grammar of chaos.
Working in this area, Metropolis, Stein and Stein proved, in their seminal work ~\cite{metro}, that the transformations of an interval in itself, by an unimodal function $f$, generate  limit sets exhibiting universal structures. The limit sets are characterized  by finite sequences of symbols $\mathrm{R}$ and $\mathrm{L}$, that indicate whether successive iterates of the unimodal function  are plotted right or left of the critical point of the function. If a sequence of symbols is repeated periodically then it represents a periodic orbit of the dynamical system.

Once symbolic sequences associated to orbits are known a new question arises. How are symbolic sequences of $h-$periodic  and $s-$periodic  orbits  composed to obtain  the symbolic sequence of a $hs-$periodic orbit?. The answer to this problem is found in the fundamental work by Derrida, Gervois  and  Pomeau ~\cite{De}. The  Derrida-Gervois-Pomeau  composition law allows us to get new  information  about a dynamical system from information already known.  It is an example of the power of symbolic dynamics, that we want to generalize in this paper.

If we pay attention to symbolic sequence of a  $h-$periodic orbit we observe a sequence of letters $\mathrm{R}$ and $\mathrm{L}$, new questions, logically, arise: what are the relative positions of different   $\mathrm{R}(\mathrm{L})$? what $\mathrm{R}(\mathrm{L})$ is rightmost  or leftmost than  other $\mathrm{R}(\mathrm{L})$? is a given $\mathrm{R}(\mathrm{L})$ to the right or left of another previously  given $\mathrm{R}(\mathrm{L})$?.
To answer these questions we need to introduce  an order on the set of points  that form the $h-$periodic  orbit. This order determines the relative positions of points about  which  questions are answered. This problem was approached and solved in  \cite{Mar} for period doubling cascade ~\cite{Fei1} ~\cite{Fei2} (see  kneading theory \cite{Mil} and invariant coordinates \cite{Gil}) for a different approach). In this same paper, the authors met  the cardinal ordering permutation $\text{{\large{$\sigma_{2^{k}}$}}}$ for every orbit of period doubling cascade. This permutation allows to know both how many iterates   are needed to reach a given point of the orbit from another fixed point and the relative position of the point.

Bearing  in mind that the symbolic sequence composition law provides information about new orbits  then the natural generalization is how to compose two arbitrary cardinal ordering permutation $\text{{\large{$\sigma_{h}$}}}$ and $\text{{\large{$\sigma_{s}$}}}$ to obtain $\text{{\large{$\sigma_{hs.}$}}}$   From this generalization  visiting order and relative position of points in a $hs-$periodic orbit are obtained, that is, new information about the dynamical system is deduced. In fact, as it is discussed in the  conclusions, infinitely many orbits are fully determined by only two cardinal ordering permutations. Furthermore, cardinal ordering permutations  for period doubling cascade and saddle-node bifurcation cascade \cite{Mar1, Mar2} are immediately derived and therefore all underlying periodic structures
in an unimodal map. That is, these structures are fully determined from their cardinal ordering permutations that we can obtain with the composition law. The most remarkable goal of this paper is to find such a composition law.

Before starting the main part of the paper we are going to introduce some intuitive ideas and related problems that will be fundamental in its development  and useful to understand the sometimes tedious mathematical apparatus.

If the $hs-$periodic orbit is the result of the birth of a $s-$periodic orbit in a $h-$periodic  window, there is an intuitive and geometric understanding of the problem that we will use to obtain the composition law. Imagine there are $h$ rooms in a corridor, bearing $s$ chairs each. Every time you move an iterate you go into a different room and you sit in  a chair in the room. You do not visit  the rooms one after the another, as found in the corridor, but according to a rule given by   $\text{{\large{$\sigma_{h}$}}}$. Each time you visit the same room, you will sit in a different chair according to  $\text{{\large{$\sigma_{s.}$}}}$

Two questions naturally arise:
\begin{enumerate}
  \item How many movements (iterates) do we have to make to reach the $r-$th chair (supposed they were arranged in a row)?
  \item What chair would you reach if you made  $q$ movements (iterates)?
\end{enumerate}

Both questions will be answered in this paper. They are very related to composition law. That can give an idea of the power of the law. Nonetheless to get  the answers two important problems have to be previously solved.

\begin{itemize}
\item [i)] There are two kinds of rooms, associated with a maximum and minimum of $f^h$ respectively (every iteration of $f$ represents a movement). When we enter in a room of maximum kind then   $\text{{\large{$\sigma_{s}$}}}$ indicates on what chair we sit down, but if we enter in a room of minimum kind we must use the conjugate of the permutation $\text{{\large{$\overline{\sigma_{s}}$}}}$  ~\cite{Mar}. Therefore, we have to know whether $f^h$ has  a maximum or a minimum. Should $f^h$  had $5000$ extrema and we had to take the $5000$ derivatives of $f^h$ the method would be totally useless. We have to get  the information easily and without taking derivatives, so we will prove a proposition that allows us to get this information very easily.

                   \item [ii)] If we are in a room seated in a chair then we know what chair we will sit the next time  we visit  the room, because of  $\text{{\large{$\sigma_{s.}$}}}$ But, if we visit the room for the first time we do not have that information. Therefore, it is necessary to determine what chair we will sit the first time we visit a room.

                                 \end{itemize}

The  solutions of both problems are related and they are an important milestone in finding the composition law of cardinal ordering permutation.

Apart  from its theoretical value \emph{per se}, a composition law for permutations is useful wherever permutation ruling orbits are used, since composition law indicates how to compound  different elements  to get more information. Its properties are expected to underlie in permutations modelling area preserving monotone twist maps \cite{Glen}, entropy via permutations \cite{A, Am, Ami, Ban} and patterns in dynamical system \cite{Amig, Amigo}; subjects, that in turn, underlie in other fields \cite{Z}.

 This paper is organized as follows. First, most of the definitions are introduced. Second,    some lemmas and  proposition are proven to solve the mentioned problems. Thirst, composition laws for cardinal ordering permutations are obtained. Then composition laws are shown in algorithmic way. Finally the last section contains our conclusions and discussions.  Some examples will also be shown to  ease the use of theorems and algorithms by scientists and engineers.

\medskip{}


\section{Definitions and notation}

Let $f:\mathrm{I}\subset\rz\to\mathrm{I}$ be an unimodal map,
and let $\mathrm{C}$ denote its critical point (we suppose, without loss of generality, that $\mathrm{C}$ is a maximum). Let $O_p=\{ x_{1},\ldots,x_{p}\equiv \mathrm{C}\}=\{ f(\mathrm{C}),\ldots, f^{p-1}(\mathrm{C}),f^p(\mathrm{C})\equiv \mathrm{C}\}$
be a $p$-periodic supercycle of $f$, then the first
and the second iterates of $\mathrm{C}$ determine the subinterval $\mathrm{J}=[f^{2}(\mathrm{C}),f(\mathrm{C})]$ such that $O_{p}\subset[f^{2}(\mathrm{C}),f(\mathrm{C})]=\mathrm{J}.$ We denote $\mathrm{J}_{\mathrm{L}}=[f^{2}(\mathrm{C}),\mathrm{C}]$ and $\mathrm{J}_{\mathrm{R}}=[\mathrm{C}, f(\mathrm{C})].$
Since the orbit $O_{p}$ does not have the natural order within the
interval $\mathrm{J}=[f^{2}(\mathrm{C}),f(\mathrm{C})]$, some previous definitions, introduced in \cite{Mar}, are needed.

 The set $\{ \mathrm{C}_{(1,p)}^{*},\, \mathrm{C}_{(2,p)}^{*},\ldots,\mathrm{C}_{(p,p)}^{*}\}$
will denote the descending cardinality ordering of the orbit
$O_p=\{ x_{1},x_{2},\ldots,x_{p}\equiv \mathrm{C}\}$. Given that $\mathrm{C}_{(i,p)}^{*}\ i=1,\ldots,p$, is the point of
the orbit $O_{p}$ located in the  position $'i'$ according to the descending cardinality   ordering, the point $\mathrm{C}_{(i,p)}^{*}$ is defined as the
$i-$th cardinal of the $p-$period orbit.
Notice that $f^{2}(\mathrm{C})=\mathrm{C}_{(p,p)}^{*}<\ldots<\mathrm{C}_{(2,p)}^{*}<\mathrm{C}_{(1,p)}^{*}=f(\mathrm{C}).$

 Let $O_p$
be a $p$-periodic supercycle  of $f$, we denote as {\large {${\sigma}_{p}$}} the permutation   {\large $\sigma_{p}$} $=(\sigma{(1,p)},\ldots,\sigma{(p,p)})$,  where  ${\sigma}{(i,p)}$
\, is the number of iterations of $f$ from $\mathrm{C} $ such that $f^{\sigma{(i,p)}}(\mathrm{C})=\mathrm{C}_{(i,p)}^{*}=x_{\sigma{(i,p)}}$ for all  $i=1,\ldots,p$ (see figures \ref{f1} and \ref{fig2}). This permutation  will be called the cardinal ordering permutation of the orbit $O_{p}$.

\begin{figure}
\includegraphics[width=0.86\textwidth]{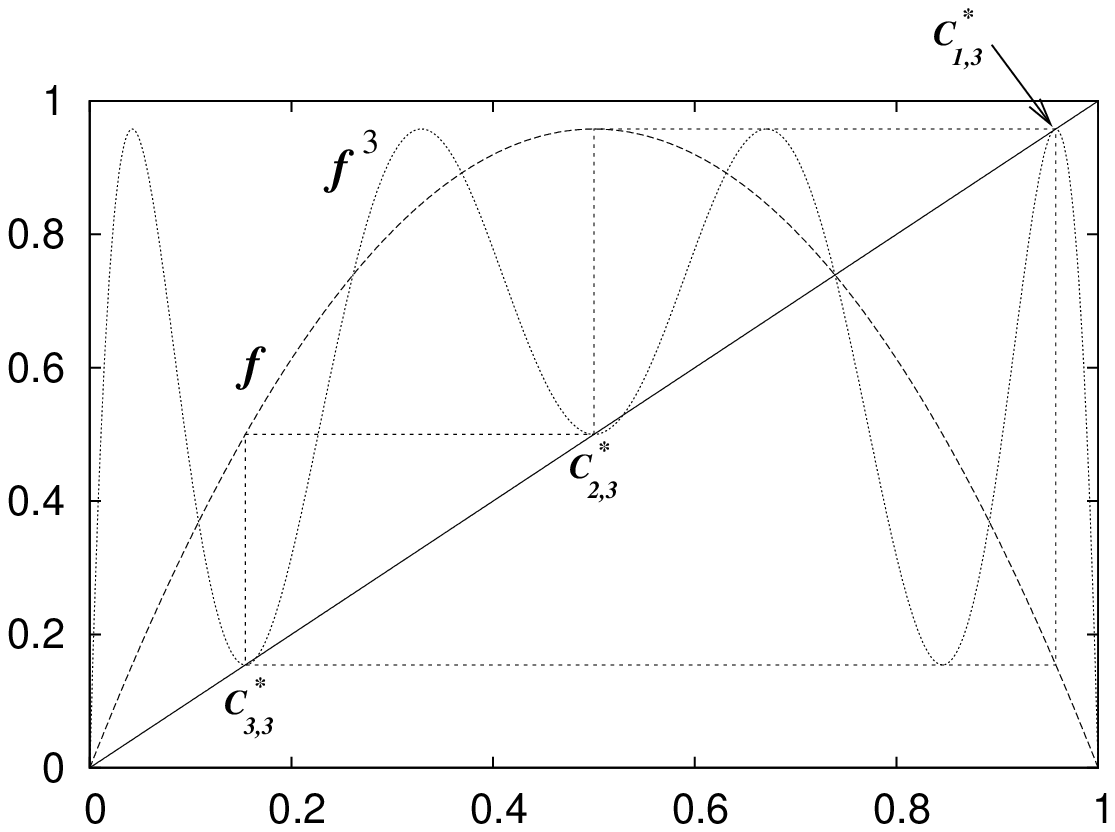}
\caption{\label{f1}%
  \(\protect \quad  \text{\large{$\sigma$}}_{3}=\left( \protect\begin{array}{ccc} 1 & 2 & 3  \protect\\  1 & 3 & 2 \protect\end{array} \right) \protect \ \)
\protect\begin{tabular}{c}
 $\leftarrow \text{indicates the position } 'i' \text{ of } \mathrm{C}_{(i,3)}^{*}$ \protect\\
$\leftarrow  \text{number of iterations, }\sigma_{(i,3)},\text{ to reach \mbox{$\mathrm{C}_{{(i,3)}}^{*}$}}$
\protect\end{tabular}
$ \protect\text{\large{$\sigma$}}^{-1}_{3}=\left( \protect\begin{array}{ccc} 1 & 2 & 3  \protect\\  1 & 3 & 2 \protect\end{array} \right) $}

\end{figure}

\begin{figure}%
\includegraphics[width=0.86\textwidth]{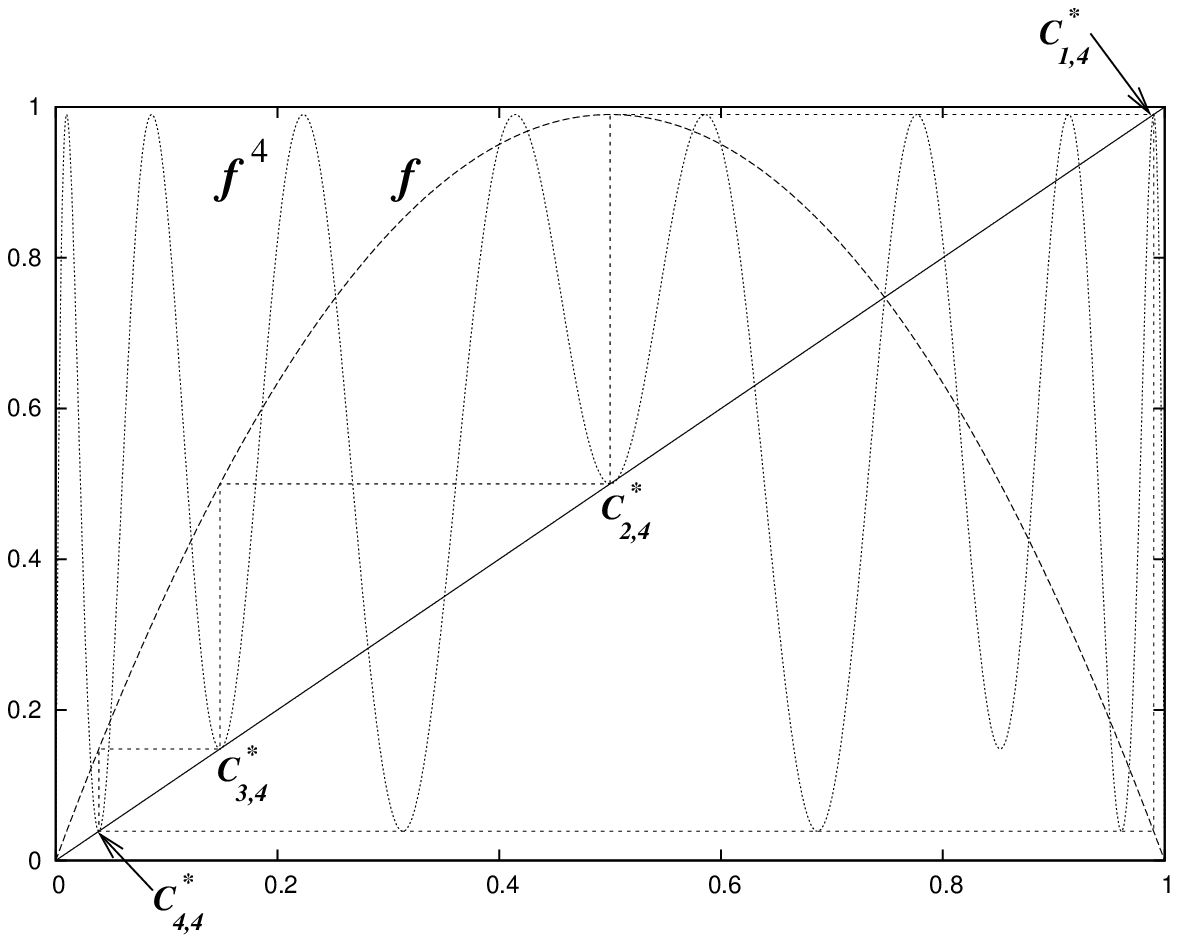}
\caption{\label{fig2}
  \(\protect \quad \text{\large{$\sigma$}}_{4}=\left( \protect\begin{array}{cccc} 1 & 2 & 3 &4 \protect\\  1 & 4 & 3&2 \protect\end{array} \right) \protect \qquad\)
\( \text{\large{$\sigma$}}^{-1}_{4}=\left( \protect\begin{array}{cccc} 1 & 2 & 3 &4 \protect\\  1 & 4 & 3&2 \protect\end{array} \right) \)
}
\end{figure}

Authors proved in \cite{Mar}  that  the inverse permutation determines  the order  in which  the orbit  is visited. That is, the visiting sequence of the $p$-periodic supercycle  $O_{p}$ from $\mathrm{C}$  is the inverse permutation {\large{${\sigma}$}}$_{p}^{-1}=(\sigma^{-1}{(1,{p})},\,\sigma^{-1}{(2,{p})},\ldots,\sigma^{-1}{({p}, {p})})$  (see figures \ref{f1} and \ref{fig2}). Therefore, depending on which context  it is used,  {\large{${\sigma}$}}$_{p}^{-1}$ will be called inverse permutation  or visiting sequence.

\begin{defi}\label{def1}
 The symbolic sequence of the supercycle $O_{p}$, denoted by $\mathrm{I}_1, \mathrm{I}_2,\ldots,\mathrm{ I}_{p-1}, \mathrm{C}$ where $\mathrm{I}_{i}=\mathrm{R}$ or $\mathrm{L}$ \, for all $i=1,\ldots,p-1$, will be called  $p-$symbolic sequence.
\end{defi}

\begin{defi}\label{def2} Let  $\mathrm{I}_1, \mathrm{I}_2,\ldots, \mathrm{I}_{p-1}, \mathrm{C}$ be a $p-$symbolic sequence. Then
\begin{enumerate}
  \item [(i)] $\mathrm{I}_j$  is the $p-$symbolic sequence element associated  to the element $x_j$ of the orbit $O_p$.
  \item [(ii)] $\mathrm{I}_1, \mathrm{I}_2,\ldots,\mathrm{I}_{j-1}$  is the subsequence preceding  $\mathrm{I}_j$ of the $p-$symbolic sequence. The subsequence preceding  $I_1$ of the $p-$symbolic sequence is $\mathrm{C}$ by definition.
  \item [(iii)] $\mathrm{I}_j,\ldots,\mathrm{I}_{p-1},\mathrm{ C}$  is the subsequence following   $\mathrm{I}_{j-1}$ of the $p-$symbolic sequence.
  \item [(iv)] $\mathrm{I}_{j}, \mathrm{I}_{j+1}, \ldots, \mathrm{I}_{q-1}, \mathrm{I}_q$ is the $p-$symbolic sequence from  $\mathrm{I}_j$ to $\mathrm{I}_{q}$.
\end{enumerate}
\end{defi}

We want to study  a  $hs$-periodic supercycle $O_{hs}$ of $f$ with $hs$ points, resulting from the combination of  two  supercycles   with  $h$ and  $s$ points respectively. In order to get that we split the $hs$ points in $h$ sets, that  we will name boxes, with $s$ points each.
 The boxes are visited according to {\large{$\sigma$}}$_{h}^{-1}$, while {\large{$\sigma$}}$_{s}^{-1}$ gives the order in which points inside the boxes are visited.  As the boxes are visited
one after another according {\large{$\sigma$}}$_{h}^{-1}$, it results that the $h-1$ first elements of the symbolic sequence of $O_{hs}$  coincide with
 the  elements  of the symbolic sequence of $O_{h}$.
The boxes are associated with the extrema of $f^h$ (closest to the straight line $y=x$) and each of them contains the extrema of $f^s$   (closest to the straight line $y=x$) as it can be seen in figure \ref{fighs}. Notice that the graph of $f^s$ is repeated in each extremum of $f^h$, and furthermore they intersect  the straight line $y=x$:  the points of the supercycle $O_{hs}$.

\begin{figure}
\includegraphics[width=0.86\textwidth]{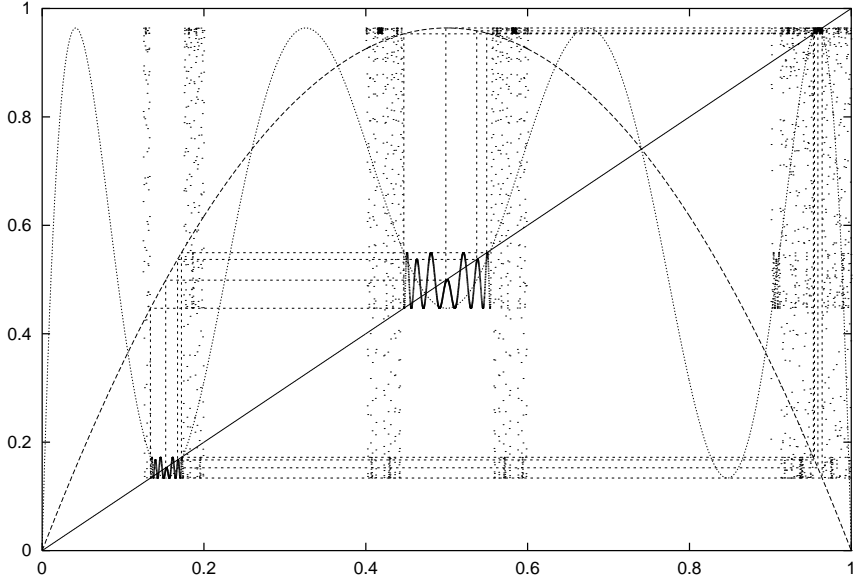}
\caption{\label{fighs} The graph of $f^s$ is repeated close to the straight line $y=x$}
\end{figure}

According to this idea we introduce the following definition

\begin{defi}\label{def3} Let $f(x;r)$ be an unimodal map such that for $r=r_h$ and $r=r_s$ it has the supercycles $O_h$ and $O_s$ having periods $h$ and $s$, respectively. We define  $hs-$periodic supercycle of $f$, denoted by  $O_{hs}$, the supercycle of $f$  $O_{hs}=\{x_1, x_2, \ldots, x_{hs}\equiv \mathrm{C}\}$ originated at $r=r_{hs}$, where:
\begin{enumerate}
  \item [i)]  The $hs$ points split in $h$ sets with $s$ points each.
  \item [ii)] The $h$ sets are in the neighborhoods of the extrema of $f^h(x;r_{hs})$, closest to the straight line $y=x$, containing  the graph of $f^s(x;r_s)$.
\end{enumerate}
\end{defi}

The definition \ref{def3} refers to supercycles associated with period doubling cascade, saddle-node bifurcation cascade and the birth of windows inside other windows.

\begin{defi}\label{def4}  Let $\{ \mathrm{C}_{(1,hs)}^{*},\, \mathrm{C}_{(2,hs)}^{*},\ldots,\mathrm{C}_{(hs,hs)}^{*}\}$ be  the cardinal points of  $O_{hs}$.  The $i-$th cardinal  box of  $O_{hs}$ with $i=1,\ldots, h$,   denoted by  $H_{i}^*$, is the set
$H_{i}^*=\{c_{(1,i)}^*, c_{(2,i)}^*, \ldots, c_{(s,i)}^*\}$,
where
\[c_{(j,i)}^*=\mathrm{C}_{((i-1)s+j,hs)}^* \qquad \forall \, j=1,\ldots,s\]
\end{defi}

Observe that elements belonging  $O_{hs}$ are denoted by capital letters, but when these elements belong to a box they are denoted by small letters. With this definition we achieve that both   the  boxes $H_{i}^*$  with $i=1,\ldots, h$ and the points inside  them  maintain the descending cardinality  ordering.

 \begin{Remark}\label{r2} As said above $\sigma_{h}^{-1}$  determines how boxes are visited, therefore  if $x_j\in O_{hs}$ with $1\leq j \leq h$ then $x_j\in H_{{\sigma^{-1}{(j,h)}}}^*$. In fact,  $x_{j+rh}$ with $r=0,\ldots, s-1$ are the unique points  belonging to  $ H_{{\sigma^{-1}{(j,h)}}}^*$. In other words we can  say that  if   $x_{j}\in H^*_{i}$  then $j={\sigma{(i,h)}}+ rh$ for some $r=0,\ldots, s-1.$\end{Remark}

 \begin{defi}\label{def5}  The central    box of $O_{hs}$, denoted as ${H^*_{\mathrm{C}}}$, is the box containing $\mathrm{C}$, that is ${H^*_{\mathrm{C}}}={H^*}_{{\sigma^{-1}_{(h,h)}}}$. Notice that, according  to remark \ref{r2},  $x_{rh}\in H^*_{\mathrm{C}}=H^*_{{\sigma^{-1}_{(h,h)}}}$  with $r=1,\ldots, s$.
\end{defi}

We must point out that   $H^*_{\sigma^{-1}_{(j,h)}}\neq H^*_{\mathrm{C}}$ with $1\leq j<h$, then either  $H^*_{\sigma^{-1}_{(j,h)}}\subset \mathrm{J}_{\mathrm{L}}$  or $H^*_{\sigma^{-1}_{(j,h)}}\subset {\mathrm{J}}_{\mathrm{R}}$. This  result  will be very useful and widely used along the paper.

\begin{defi}\label{def6}   The box  $H^*_{i}$ is a convex box  of $O_{hs}$ if there  exists some $x_k\in H^*_{i}\subset O_{hs}$ such that $x_k$ is a maximum of $f^h$. The box  $H^*_{i}$ is a concave box  of $O_{hs}$
if there  exists some $x_k\in H^*_{i}\subset O_{hs}$ such that $x_k$ is a minimum of $f^h$.
\end{defi}

\begin{defi}\label{def7}   We say  that $c^*_{(1,i)}$  and  $c^*_{(s,i)}$ are boundary points for all $i=1,\ldots,h$.
\end{defi}

 \begin{defi}\label{def8}   Let be $O_{hs}=\{x_1, x_2, \ldots, x_{hs}\equiv \mathrm{C}\}$. We say   that the order entry from $H^*_{\sigma^{-1}(n,h)}$ to  $H^*_{\sigma^{-1}(m,h)}$, with $1\leq n< m\leq h$,  is conserved if  $f^{m-n}(c^*_{(1,\sigma^{-1}(n,h))})=c^*_{(1,\sigma^{-1}(m,h))}$ or  $f^{m-n}(c^*_{(s,\sigma^{-1}(n,h))})=c^*_{(s,\sigma^{-1}(m,h))}$.

We say  that  the order entry from $H^*_{\sigma^{-1}(n,h)}$ to  $H^*_{\sigma^{-1}(m,h)}$, with $1\leq n< m\leq h$,  is reversed if  $f^{m-n}(c^*_{(1,\sigma^{-1}(n,h))})=c^*_{(s,\sigma^{-1}(m,h))}$ or  $f^{m-n}(c^*_{(s,\sigma^{-1}(n,h))})=c^*_{(1,\sigma^{-1}(m,h))}$.
\end{defi}


\section{Theorems}

In order to get our goals  we need to prove some lemmas that  make easy the  proof of the theorems.
\begin{lem}\label{lema1} Let $O_{hs}=\{x_1, x_2, \ldots, x_{hs}\equiv C\}$ be a  $hs-$periodic supercycle of $f$. Then $x_1,x_2,\ldots, x_h$ are boundary points.
\end{lem}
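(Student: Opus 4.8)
The plan is to unwind the definitions of the central box (Definition \ref{def5}) and of the box membership described in Remark \ref{r2}, and then exploit the ordering structure on the first two iterates that sets up the interval $\mathrm{J}=[f^2(\mathrm{C}),f(\mathrm{C})]$. Recall that $x_{hs}\equiv\mathrm{C}$, and the first two iterates $f(\mathrm{C})=x_1$ and $f^2(\mathrm{C})=x_2$ are the cardinal extremes $\mathrm{C}^*_{(1,hs)}$ and $\mathrm{C}^*_{(hs,hs)}$, i.e. the largest and smallest points of $O_{hs}$. By Definition \ref{def4}, $\mathrm{C}^*_{(1,hs)}=c^*_{(1,1)}$ is a boundary point of $H^*_1$ and $\mathrm{C}^*_{(hs,hs)}=c^*_{(s,h)}$ is a boundary point of $H^*_h$; so $x_1$ and $x_2$ are already boundary points, and the work is to handle $x_3,\ldots,x_h$.

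The key step is to show that each of $x_1,\ldots,x_h$ lies in a distinct box, none of which is the central box $H^*_{\mathrm{C}}$ except possibly as forced by indices, and that within its box each such point is extremal (either the largest, $c^*_{(1,i)}$, or the smallest, $c^*_{(s,i)}$, of that box). For the distinctness: by Remark \ref{r2}, $x_j\in H^*_{\sigma^{-1}(j,h)}$ for $1\le j\le h$, and since $\sigma_h^{-1}$ is a permutation of $\{1,\ldots,h\}$, the indices $\sigma^{-1}(1,h),\ldots,\sigma^{-1}(h,h)$ are all distinct; hence $x_1,\ldots,x_h$ occupy $h$ different boxes, i.e. one point per box. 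For extremality: the crucial observation is the remark following Definition \ref{def5}, namely that for $1\le j<h$ the box $H^*_{\sigma^{-1}(j,h)}$ is entirely contained in $\mathrm{J}_{\mathrm{L}}=[f^2(\mathrm{C}),\mathrm{C}]$ or entirely in $\mathrm{J}_{\mathrm{R}}=[\mathrm{C},f(\mathrm{C})]$. Each such box is the set of $hs$-orbit points sitting near one extremum of $f^h$ (closest to $y=x$), and it inherits the descending cardinality ordering of $O_{hs}$ as a contiguous block $\{\mathrm{C}^*_{((i-1)s+1,hs)},\ldots,\mathrm{C}^*_{(is,hs)}\}$. Because the boxes are contiguous blocks of consecutive cardinals and are linearly ordered in $\mathrm{J}$, the image under $f$ of a box is again a block that is a box (the graph of $f^s$ is reproduced near each extremum of $f^h$); iterating $f$ permutes the boxes according to $\sigma_h$, and restricted to boundary points this permutation of boxes is realized by $f$ mapping boundary points to boundary points — this is exactly the content that will let us conclude $x_j=f^{j-1}(\mathrm{C})$ is a boundary point for $j=1,\ldots,h$.

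Concretely, I would argue as follows. Since $x_1=f(\mathrm{C})$ is the global maximum of $O_{hs}$, it is the top point $c^*_{(1,1)}$ of its box, hence a boundary point. Now proceed inductively: suppose $x_j$ is a boundary point of its box $H^*_{\sigma^{-1}(j,h)}$, say $x_j=c^*_{(1,\sigma^{-1}(j,h))}$ or $c^*_{(s,\sigma^{-1}(j,h))}$, for some $1\le j<h$. Applying $f$ once sends the whole box $H^*_{\sigma^{-1}(j,h)}$ onto the next box $H^*_{\sigma^{-1}(j+1,h)}$ (since the box-visiting order is $\sigma_h^{-1}$ and these boxes are disjoint blocks mapped onto one another by $f^h$ as $f$ steps through them); as $f$ is monotone on the half $\mathrm{J}_{\mathrm{L}}$ or $\mathrm{J}_{\mathrm{R}}$ containing $H^*_{\sigma^{-1}(j,h)}$ (valid because $j<h$ means that box avoids the critical point), $f$ restricted to that box is an order isomorphism or order antiisomorphism onto the target box, so it carries the two boundary points $c^*_{(1,\cdot)},c^*_{(s,\cdot)}$ of $H^*_{\sigma^{-1}(j,h)}$ to the two boundary points of $H^*_{\sigma^{-1}(j+1,h)}$. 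In particular $x_{j+1}=f(x_j)$ is a boundary point. This completes the induction and yields the claim for $x_1,\ldots,x_h$; the case of $x_h$ needs the separate observation already noted that $x_2=f^2(\mathrm{C})$ is the global minimum, a boundary point, which pins down the endpoint of the induction consistently.

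The main obstacle I anticipate is making rigorous the assertion that $f$ maps a non-central box \emph{onto} the next box in the $\sigma_h^{-1}$ order as an order (anti)isomorphism, i.e. that $f$ does not mix points of one box with points of another. This rests on the geometric picture in figure \ref{fighs} — the graph of $f^s$ reproduced near each extremum of $f^h$ — together with monotonicity of $f$ away from $\mathrm{C}$; formalizing it amounts to observing that for $1\le j<h$ the box $H^*_{\sigma^{-1}(j,h)}$ and its image $f(H^*_{\sigma^{-1}(j,h)})$ each contain no point of the orbit strictly between their own points (they are cardinal-consecutive blocks), so a monotone $f$ must send block to block preserving or reversing order, hence boundary to boundary. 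Everything else is bookkeeping with Remark \ref{r2} and Definition \ref{def4}.
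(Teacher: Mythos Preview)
Your inductive argument via monotonicity of $f$ on $\mathrm{J}_{\mathrm{L}}$ and $\mathrm{J}_{\mathrm{R}}$ is exactly the paper's approach. The obstacle you flag at the end dissolves once you use Remark~\ref{r2} to write the points of $H^*_{\sigma^{-1}(j,h)}$ explicitly as $\{x_{j+rh}:r=0,\ldots,s-1\}$, so that $f$ carries them to $\{x_{j+1+rh}\}=H^*_{\sigma^{-1}(j+1,h)}$ and monotonicity on the half-interval immediately yields extremal $\mapsto$ extremal; this pointwise comparison (rather than speaking of an order-isomorphism of boxes) is precisely how the paper phrases the step.
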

\begin{proof} As $x_1=\mathrm{C}_{(1,hs)}^*=c_{(1,1)}^* $ and $x_2=\mathrm{C}_{(hs,hs)}^* =c_{(s,h)}^*$ it results that $x_1$ and $x_2$ are boundary points from definition \ref{def7}.

Let us first prove that $x_3$ is also   a boundary points.
We have $x_2=c^*_{(s,h)}$ then  $x_2\in  H^*_{h}=H^*_{\sigma^{-1}(2,h)}$ thereby  $x_2< x_{2+rh}$ for all $r=1,\ldots,s-1$.  Given $H^*_{h}\subset J_{L}$ and $f$ is an increasing function in $J_{L}$ it results that  $x_3=f(x_2)<  f(x_{2+rh})$. As {\large{$\sigma_{h}^{-1}$}} determines how boxes are visited  we obtain  that $x_3,  f(x_{2+rh})\in  H^*_{\sigma^{-1}(3,h)}$ and as  $x_3<  f(x_{2+rh})$  it yields  $x_3=c^*_{(s,\sigma^{-1}(3,h))}$ and from definition \ref{def7}  it results $x_3$ is a boundary point.

Let us next  prove that $x_4$ is also   a boundary point. As  $x_3=c^*_{(s,\sigma^{-1}(3,h))}\in H^*_{\sigma^{-1}(3,h)}$ then we have $x_3< x_{3+rh}$ for all $r=1,\ldots,s-1$.
 Two cases are possible:
 \begin{enumerate}
   \item [i)] $H^*_{\sigma^{-1}(3,h)}\subset J_{L}$.

   In this case  we repeat the previous argument  and obtain   $x_4=c^*_{(s,\sigma^{-1}(4,h))}$.
   \item [ii)] $H^*_{\sigma^{-1}(3,h)}\subset J_{R}$.

    As $f$ is  a decreasing function in $J_R$  it results $f(x_3)> f(x_{3+rh})$ for all $r=1,\ldots,s-1$. Since {\large{$\sigma_{h}^{-1}$}} determines how boxes are visited and furthermore  $f(x_3)=x_4, f(x_{3+rh})\in  H^*_{\sigma^{-1}(4,h)}$ then  it yields that $x_4=c^*_{(1,\sigma^{-1}(4,h))}$ and, from definition \ref{def7},  $x_4$ is a boundary point.
 \end{enumerate}

The same proof works for  $x_5, \ldots, x_{h}.$ \end{proof}

\begin{figure}
\includegraphics[width=1.15\textwidth]{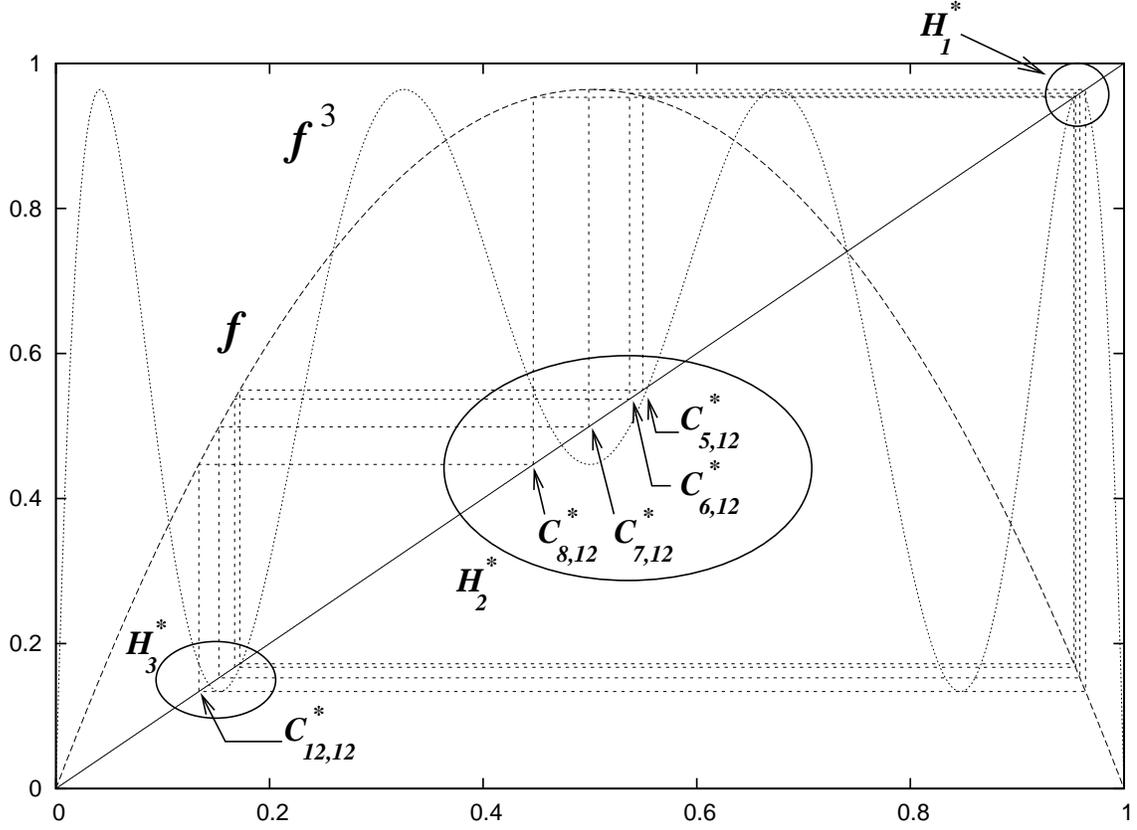}\\
\caption{ \label{f3} An $O_{3 \cdot 4}$ supercycle can be observed. Boxes $H^{*}_{1}$, $H^{*}_{2}$, $H^{*}_{3}$ have been plotted in the extrema of $f^{3}$. Some cardinals have also been plotted. $\mathrm{C}^{*}_{5,12}$ and $\mathrm{C}^{*}_{8,12}$ are boundary points in $H^{*}_{2}$.}
\end{figure}
\begin{figure}
\includegraphics[width=0.85\textwidth]{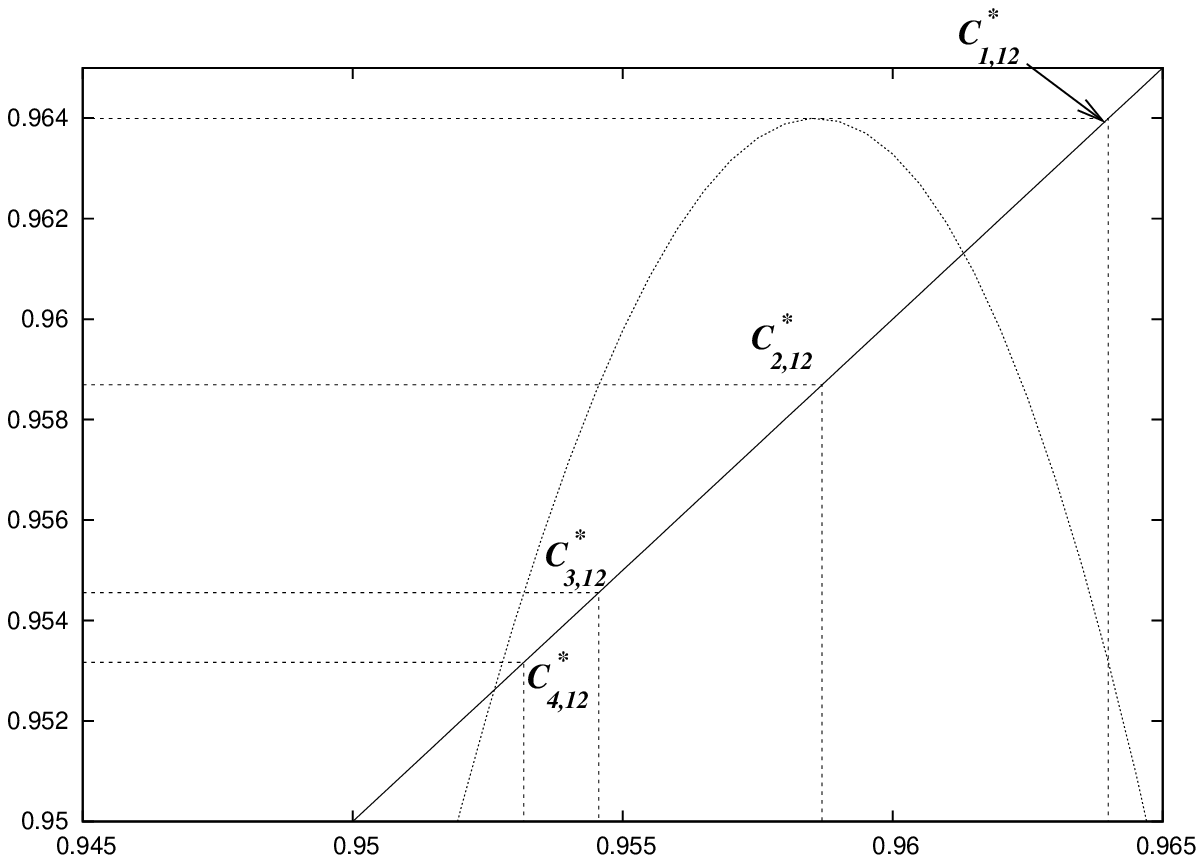}
\caption{\label{figure4} Enlargement of the box $H^{*}_{1}$ in figure \ref{f3}, where boundary points $C^{*}_{1,12}$ and $C^{*}_{4,12}$ can be seen}
\end{figure}

\begin{ex} If we pay attention to figures \ref{f3} and \ref{figure4}, where a $O_{3\cdot 4}$ supercycle is plotted, we can see that $x_1=f(\mathrm{C}), \, x_2=f(x_1)$ and $x_3=f(x_2)$ are boundary points. They are spotted in the  boundaries of their respective boxes.
\end{ex}

In order to prove the next lemma we want to point out  that if $1\leq n<m\leq h$,  $n, m\in \N$, then by the above lemma we will only work with boundary points; hence, ``entry order is not conserved'' is equivalent to ``entry order is reversed''.

\begin{lem}\label{lema2} Assuming  $1\leq n<m\leq h$, \ $n, m,h\in \N$, the entry order from $H^*_{\sigma^{-1}(n,h)}$ to  $H^*_{\sigma^{-1}(m,h)}$ is conserved if only if the $\mathrm{R}-$parity of the $hs-$symbolic sequence from $\mathrm{I}_n$ to $\mathrm{I}_{m-1}$ is even.
\end{lem}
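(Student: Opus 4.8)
The plan is to reduce the statement about ``entry order conserved/reversed'' to a statement about the sign of the derivative of $f^{m-n}$ along the relevant branch, and then to identify that sign with the $\mathrm{R}$-parity of the symbolic block. First I would set $i=\sigma^{-1}(n,h)$ and $k=\sigma^{-1}(m,h)$, so that $x_n$ is the boundary point of $H^*_i$ reached at iterate $n$ and $x_m$ the boundary point of $H^*_k$ reached at iterate $m$; by Lemma~\ref{lema1} both $x_n$ and $x_m$ are boundary points of their boxes, i.e.\ each equals $c^*_{(1,\cdot)}$ or $c^*_{(s,\cdot)}$. The two boundary points of $H^*_i$ are $x_n$ and one of its box-companions $x_{n+rh}$ (for a suitable $r$), and likewise for $H^*_k$; since $f^{m-n}$ maps the companions of $H^*_i$ bijectively onto the companions of $H^*_k$ (Remark~\ref{r2}, because $\sigma_h^{-1}$ governs how boxes are visited), the map $f^{m-n}$ restricted to $H^*_i$ either preserves the natural order of the two boundary points (``conserved'') or reverses it (``reversed''). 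The key observation is that this is decided by whether $f^{m-n}$ is increasing or decreasing on the tiny interval around $x_n$ that contains $H^*_i$.

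The second step is the chain-rule bookkeeping. On the small neighbourhood containing a box, $f^{m-n}=f\circ f\circ\cdots\circ f$ ($m-n$ factors), and by the chain rule its derivative there has sign $\prod_{t=n}^{m-1}\operatorname{sgn} f'(x_t)$ — here I use that the orbit point visited at iterate $t$ lies in the box $H^*_{\sigma^{-1}(t,h)}$, and that box lies entirely in $\mathrm{J}_{\mathrm{L}}$ or entirely in $\mathrm{J}_{\mathrm{R}}$ (the remark following Definition~\ref{def5}), so $f$ has a well-defined monotonicity on it. Since $f$ is increasing on $\mathrm{J}_{\mathrm{L}}$ and decreasing on $\mathrm{J}_{\mathrm{R}}$, we get $\operatorname{sgn} f'(x_t)=+1$ if $\mathrm{I}_t=\mathrm{L}$ and $-1$ if $\mathrm{I}_t=\mathrm{R}$. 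Hence $f^{m-n}$ is increasing near $x_n$ iff the number of $\mathrm{R}$'s among $\mathrm{I}_n,\mathrm{I}_{n+1},\ldots,\mathrm{I}_{m-1}$ is even, i.e.\ iff the $\mathrm{R}$-parity of the $hs$-symbolic sequence from $\mathrm{I}_n$ to $\mathrm{I}_{m-1}$ is even.

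The third step is to match ``$f^{m-n}$ increasing near $x_n$'' with ``entry order conserved'' in the precise sense of Definition~\ref{def8}. If $f^{m-n}$ is increasing on the interval around $x_n$ containing $H^*_i$, then it sends the smaller of the two boundary points of $H^*_i$ to the smaller of the two boundary points of $H^*_k$; since within any box the small-letter labelling respects the descending cardinality order (Definition~\ref{def4}), the smaller boundary point is $c^*_{(s,\cdot)}$ and the larger is $c^*_{(1,\cdot)}$, so increasing $\Leftrightarrow$ $f^{m-n}(c^*_{(1,i)})=c^*_{(1,k)}$ and $f^{m-n}(c^*_{(s,i)})=c^*_{(s,k)}$, which is exactly ``conserved''; decreasing gives the swapped identities, i.e.\ ``reversed'', which by the remark preceding the lemma is the negation of ``conserved'' in this boundary-point setting. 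Combining the three steps yields the equivalence.

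The main obstacle I anticipate is the first step: justifying rigorously that $f^{m-n}$ is genuinely monotone on a neighbourhood of $x_n$ large enough to contain all of $H^*_i$, and that the intermediate iterates of this whole neighbourhood stay inside single boxes (so that no intervening extremum of $f$ spoils the monotonicity). This is where Definition~\ref{def3}(ii) — the boxes sitting in the neighbourhoods of the extrema of $f^h$ closest to $y=x$, carrying a faithful copy of the graph of $f^s$ — and the fact that each non-central box lies entirely on one side of $\mathrm{C}$ must be invoked carefully; everything after that is chain-rule and the labelling conventions of Definitions~\ref{def4} and \ref{def7}.
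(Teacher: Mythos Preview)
Your proposal is correct, and the underlying idea---each $\mathrm{R}$ contributes exactly one order reversal---is the same as the paper's. The execution differs, though, in a way worth noting. The paper never invokes derivatives or the chain rule; it argues one iterate at a time. From the proof of Lemma~\ref{lema1} it extracts the single-step rule
\[
f\bigl(c^*_{(1,\sigma^{-1}(n,h))}\bigr)=
\begin{cases}
c^*_{(1,\sigma^{-1}(n+1,h))} & \text{if } \mathrm{I}_n=\mathrm{L},\\
c^*_{(s,\sigma^{-1}(n+1,h))} & \text{if } \mathrm{I}_n=\mathrm{R},
\end{cases}
\]
(and the symmetric rule for $c^*_{(s,\cdot)}$), then simply concatenates these from $n$ to $m-1$. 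Because this stepwise bookkeeping works with the finite set of box points and only uses monotonicity of $f$ on $\mathrm{J}_{\mathrm{L}}$ and $\mathrm{J}_{\mathrm{R}}$, the obstacle you anticipate---whether $f^{m-n}$ is monotone on a whole neighbourhood of $H^*_i$---never arises. Your chain-rule packaging is more compact and dovetails nicely with the derivative computation in Lemma~\ref{lema3}, but it is the paper's discrete, one-step formulation that makes your worry evaporate: for every intermediate $t$ with $n\le t<m\le h$ the box $H^*_{\sigma^{-1}(t,h)}$ is non-central (the central box is only reached at step $h$), hence lies entirely in $\mathrm{J}_{\mathrm{L}}$ or $\mathrm{J}_{\mathrm{R}}$, so $f$ is monotone on its points and sends boundary points to boundary points. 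Either route proves the lemma; the paper's just makes the delicate part automatic.
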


\begin{proof}
  Suppose the $\mathrm{R}-$parity of the $hs-$symbolic sequence from $\mathrm{I}_n$ to $\mathrm{I}_{m-1}$ is even.

Let  $x_n\in H^*_{\sigma^{-1}(n,h)}$  with $1\leq n<m\leq h$. According to lemma  \ref{lema1}, $x_n$ is a boundary point. Without loss of generality, let $x_n=c^*_{(1,\sigma^{-1}{(n,h)})}$  and  $\mathrm{I}_n$  the  $hs-$symbolic sequence element associated to $x_n$.

 It is deduced from  proof of lemma  \ref{lema1} that

\[f(c^*_{(1,\sigma^{-1}{(n,h)})})=\left\{
    \begin{array}{lll}
      c^*_{(1,\sigma^{-1}{(n+1,h)})} & &\hbox{if }   H^*_{\sigma^{-1}(n,h)}\subset J_\mathrm{L}\\
\\
      c^*_{(s,\sigma^{-1}{(n+1,h)})} & &\hbox{if }   H^*_{\sigma^{-1}(n,h)}\subset J_\mathrm{R}
    \end{array}
  \right.
   \]\[ f(c^*_{(s,\sigma^{-1}{(n,h)})})=\left\{
    \begin{array}{lll}
      c^*_{(s,\sigma^{-1}{(n+1,h)})} & &\hbox{if }   H^*_{\sigma^{-1}(n,h)}\subset J_\mathrm{L}\\
\\
      c^*_{(1,\sigma^{-1}{(n+1,h)})} & &\hbox{if }   H^*_{\sigma^{-1}(n,h)}\subset J_\mathrm{R}
    \end{array}
  \right.
\]
 As   $\mathrm{I}_n=\mathrm{L}$ if and only if    $H^*_{\sigma^{-1}(n,h)}\subset \mathrm{J}_\mathrm{L}$ and    $\mathrm{I}_n=\mathrm{R}$ if and only if     $H^*_{\sigma^{-1}(n,h)}\subset \mathrm{J}_\mathrm{R}$ (meanwhile  $H^*_{\sigma^{-1}(n,h)}$ is not the central box  because the central box is the last visited one), we  have that

\[f(c^*_{(1,\sigma^{-1}{(n,h)})})=\left\{
    \begin{array}{lll}
      c^*_{(1,\sigma^{-1}{(n+1,h)})} & &\hbox{if }   \mathrm{I}_n=\mathrm{L}\\
\\
      c^*_{(s,\sigma^{-1}{(n+1,h)})} & &\hbox{if }   \mathrm{I}_n=\mathrm{R}
    \end{array}
  \right.
\]

\[f(c^*_{(s,\sigma^{-1}{(n,h)})})=\left\{
    \begin{array}{lll}
      c^*_{(s,\sigma^{-1}{(n+1,h)})} & &\hbox{if }   \mathrm{I}_n=\mathrm{L}\\
\\
      c^*_{(1,\sigma^{-1}{(n+1,h)})} & &\hbox{if }   \mathrm{I}_n=\mathrm{R}
    \end{array}
  \right.
\]
Thus, if $\mathrm{I}_n=\mathrm{L}$ the entry order from $H^*_{\sigma^{-1}(n,h)}$ to  $H^*_{\sigma^{-1}(n+1,h)}$ is conserved; whereas  that if $\mathrm{I}_n=\mathrm{R}$ the entry order is reversed.

Using the previous argument  it successively results that if the $\mathrm{R}-$parity of the $hs-$symbolic sequence from $\mathrm{I}_n$ to $\mathrm{I}_{m-1}$ is even then
$f^{m-n}(c^*_{(1,\sigma^{-1}(n,h))})=c^*_{(1,\sigma^{-1}(m,h))}$, that is, the entry order from $H^*_{\sigma^{-1}(n,h)}$ to  $H^*_{\sigma^{-1}(m,h)}$ is conserved and necessary condition has been proved.

  On the other hand if  the $\mathrm{R}-$parity of the $hs-$symbolic sequence from $\mathrm{I}_n$ to $\mathrm{I}_{m-1}$ is odd then, by using the above argument it is concluded that
$f^{m-n}(c^*_{(1,\sigma^{-1}(n,h))})=c^*_{(s,\sigma^{-1}(m,h))}$, that is, the entry order from $H^*_{\sigma^{-1}(n,h)}$ to  $H^*_{\sigma^{-1}(m,h)}$ is reversed. Note that we have actually proved the sufficiency  condition.

 \end{proof}

\begin{ex}
Let us pay attention to figure  \ref{f3}. The $\mathrm{R}-$parity from $\mathrm{I_1}$ (associated to $ x_1=\mathrm{C}^*_{(1,12)})$ to $\mathrm{I_2}$ (associated to $x_{2}=\mathrm{C}^*_{(12,12)})$ is odd therefore according to lemma \ref{lema2} the entry   order is reversed, as we can see in figure  \ref{f3}, paying attention to boxes $H_1$ and $H_3$.
The $\mathrm{R}-$parity from $\mathrm{I_2}$ (associated to $x_{2}=\mathrm{C}^*_{(12,12)})$  to $\mathrm{I_3}$ (associated to $x_{3}=\mathrm{C}^*_{(8,12)})$  is even therefore , according to lemma \ref{lema2}, the entry   order is conserved, as we  can  see in figure  \ref{f3}, paying attention to boxes $H_3$ and $H_2$.
\end{ex}

The lemma, we have just proven, allows us to know what point in a box is reached the first time that box is visited. The following times we visit that box we will know what point  is reached by using  {\large $\sigma$}$^{-1}_{s}$ or $\overline{\text{\large$\sigma$}^{-1}_{s}}$ depending on whether the box is associated with a maximum or a minimum of $f^h$. This kind of information will be obtained in  lemma \ref{lema3} for which we need to introduce the following definition.

\begin{defi} Let $\mathrm{I}_1, \ldots, \mathrm{I}_{p-1}, \mathrm{C}$ be a  $p-$symbolic sequence. We  define the sign of $p-$symbolic sequence, and we denote as $\text{sgn}(\mathrm{I}_1, \ldots, \mathrm{I}_{p-1}, \mathrm{C})$, as the number given by
\[\text{sgn}(\mathrm{I}_1, \ldots, \mathrm{I}_{p-1}, \mathrm{C})=\left\{
                                                    \begin{array}{lc}
                                                      1 & \hbox{if $\mathrm{R}-$parity  of $\mathrm{I}_1, \ldots, \mathrm{I}_{p-1}$ is even} \\
                                                      -1 & \hbox{otherwise}
                                                    \end{array}
                                                  \right.
\]

\end{defi}
\begin{lem}\label{lema3} Let $O_{p}=\{x_1, \ldots, x_{p}\equiv \mathrm{C}\}$ be a $p$-periodic  supercycle   of an unimodal function $f$ with a maximum at $C$. Then $f^p$ has a maximum (minimum) at $x_i$, for all $i=1,\ldots, p$, if and only if the  $\mathrm{R}-$parity of the subsequence preceding $\mathrm{I}_i$ in the  $p-$symbolic sequence is even (odd).
\end{lem}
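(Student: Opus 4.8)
The plan is to argue by induction on $i$, exploiting the chain rule for the derivative of $f^p$ together with the sign of the orbit and the fact that $\mathrm{R}$ is precisely the symbol recording that $f$ is decreasing at the corresponding point. First I would set up the basic observation: since $f$ is unimodal with a maximum at $\mathrm{C}$, we have $f'(x)>0$ when $x<\mathrm{C}$ (i.e.\ when the associated symbol is $\mathrm{L}$) and $f'(x)<0$ when $x>\mathrm{C}$ (symbol $\mathrm{R}$); at $\mathrm{C}$ itself $f'=0$ and $f''<0$. Now for a point $x_i$ of the orbit, the chain rule gives $(f^p)'(x_i)=\prod_{k=0}^{p-1} f'(f^k(x_i))$. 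Taking $x_i=f^{\,i}(\mathrm{C})$ for $i<p$ and tracing the orbit around, the factor at the index where the iterate equals $\mathrm{C}$ vanishes, so $(f^p)'(x_i)=0$ for every $i$; hence each $x_i$ is a critical point of $f^p$, and it remains only to decide whether it is a maximum or a minimum, which is controlled by the sign of $(f^p)''(x_i)$.

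The key computation is that, because exactly one factor in the product vanishes (the one corresponding to the visit to $\mathrm{C}$), one can write $(f^p)''(x_i)$ as $f''(\mathrm{C})$ times the product of the \emph{nonzero} $f'$ factors around the rest of the cycle, and this surviving product splits as (product of $f'$ along the iterates from $x_i$ up to but not including $\mathrm{C}$) times (product of $f'$ along the iterates from $f(\mathrm{C})=x_1$ back around to $x_i$). The second of these two products is the same for every $i$ — it is essentially $(f^{p})'$ evaluated off the critical fiber in a fixed way — so its sign is a global constant; the first product has sign $(-1)^{\#\mathrm{R}\text{'s among }\mathrm{I}_i,\dots,\mathrm{I}_{p-1}}$. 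Combining, $(f^p)''(x_i)$ has sign $f''(\mathrm{C})\cdot(\text{const})\cdot(-1)^{r_i}$ where $r_i$ is the $\mathrm{R}$-parity of the subsequence \emph{following} $x_i$. To turn this into the statement as phrased (in terms of the \emph{preceding} subsequence) I would use that the total $\mathrm{R}$-parity of the full $p$-symbolic sequence is a fixed number, so the parity of the preceding part and that of the following part differ by a constant; pinning down that constant, and the sign of the ``const'' factor, is done by evaluating at $i=p$, i.e.\ at $\mathrm{C}$ itself: there the preceding subsequence is all of $\mathrm{I}_1,\dots,\mathrm{I}_{p-1}$ and $f^p$ has a maximum at $\mathrm{C}$ exactly when $\mathrm{sgn}(\mathrm{I}_1,\dots,\mathrm{I}_{p-1},\mathrm{C})=1$, i.e.\ when the $\mathrm{R}$-parity is even. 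This base case fixes all the signs and the general claim follows.

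An alternative, more elementary route avoids second derivatives entirely: one can induct on $i$ going \emph{backwards} from $x_p=\mathrm{C}$, where $f^p$ manifestly has a maximum. Suppose $f^p$ has a maximum (resp.\ minimum) at $x_{i}$; since $x_{i}=f(x_{i-1})$ and $f^{p}=f^{p}\circ f$ near these points with $f$ a local diffeomorphism at $x_{i-1}$ (as $i-1<p$ so $x_{i-1}\neq \mathrm{C}$), the map $f^p$ has at $x_{i-1}$ a maximum or a minimum according to whether $f$ is increasing or decreasing at $x_{i-1}$, i.e.\ according to whether $\mathrm{I}_{i-1}=\mathrm{L}$ or $\mathrm{I}_{i-1}=\mathrm{R}$. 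Thus passing from $x_i$ to $x_{i-1}$ flips max/min exactly when $\mathrm{I}_{i-1}=\mathrm{R}$, which is precisely the bookkeeping recorded by the $\mathrm{R}$-parity of the preceding subsequence, and the result drops out. The main obstacle in either approach is purely notational: carefully handling the wrap-around of the orbit (the point where an iterate lands on $\mathrm{C}$) and making sure that ``preceding subsequence'' is counted with the convention of Definition~\ref{def2}(ii), including the degenerate case of $\mathrm{I}_1$ whose preceding subsequence is just $\mathrm{C}$, so that $x_1=f(\mathrm{C})$ correctly comes out as a maximum of $f^p$ (even $\mathrm{R}$-parity, namely zero).
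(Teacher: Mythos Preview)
Both of your approaches contain genuine gaps.

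In the first approach the second-derivative computation is incorrect. When you differentiate the product $(f^p)'(y)=\prod_{j=0}^{p-1}f'(f^j(y))$ and evaluate at $x_i$, the single surviving term is not simply $f''(\mathrm{C})$ times the product of the nonzero $f'$ factors: applying the chain rule to the vanishing factor $f'(f^{p-i}(y))$ contributes an \emph{additional} factor $(f^{p-i})'(x_i)=\prod_{k=i}^{p-1}f'(x_k)$. The correct expression---and this is exactly what the paper derives---is
\[
(f^p)''(x_i)=f''(\mathrm{C})\,\Bigl[\prod_{k=i}^{p-1}f'(x_k)\Bigr]^{2}\,\prod_{k=1}^{i-1}f'(x_k),
\]
so the ``following'' block appears \emph{squared} (hence is always positive) and only the \emph{preceding} block carries a sign. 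The lemma then falls out directly, with no calibration step. In particular your assertion that ``the second of these two products is the same for every $i$'' is false (the preceding product $\prod_{k=1}^{i-1}f'(x_k)$ certainly depends on $i$), and your calibration at $i=p$ simply assumes the $i=p$ instance of the lemma.

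In the second approach the inductive step is essentially sound---what you are using is the commutation $f\circ f^p=f^p\circ f$ (not ``$f^p=f^p\circ f$'' as written), and conjugation by an orientation-reversing local diffeomorphism does swap maxima and minima---but the base case is wrong: $f^p$ does \emph{not} in general have a maximum at $\mathrm{C}$. Already the period-$2$ supercycle (sequence $\mathrm{RC}$) gives $f^2$ with a \emph{minimum} at $\mathrm{C}$. The fix is to anchor the induction at $i=1$ instead: since $f^{p-1}(x_1)=\mathrm{C}$ and $f$ has its maximum at $\mathrm{C}$, writing $f^p=f\circ f^{p-1}$ gives $f^p(y)\le f(\mathrm{C})=x_1=f^p(x_1)$ for $y$ near $x_1$, so $f^p$ genuinely has a maximum at $x_1$; then run your flip argument \emph{forward}. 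With this correction the second route becomes a valid, derivative-free alternative to the paper's chain-rule computation.
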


\begin{proof}
 As we are  working with a supercycle,  then $(f^p(x_i))'=0$ for all $i=1,\ldots, p$ and all the points belonging  to the supercycle are either  maxima or minima. Thus, we only need to study the sign of $(f^p(x_i))''$ at each $x_i\in O_p$ to know whether we have a maximum or a minimum.

By using the chain rule and taking into account that $(f^{p-i})'(x_i)=f'(\mathrm{C})=0$ it yields:

\[(f^p(x_i))''=f''(f^{p-i}(x_i))\prod_{j=1}^{i-1}f'(f^{p-j}(x_i))\prod_{j=i+1}^{p} \left(f'(f^{p-j}(x_i))\right)^2 \] As $f$ has a maximum at  $\mathrm{C}$ it follows $f''(f^{p-i}(x_i))=f''(\mathrm{C})< 0$ and it implies

 \[\text{sgn} \left((f^p(x_i))''\right)=- \text{sgn} \left(\prod_{j=1}^{i-1}f'(f^{p-j}(x_i))\right)\]consequently,
  \[\text{sgn} \left((f^p(x_i))''\right)=- \text{sgn} \left(\prod_{j=1}^{i-1}f'(x_j)\right)\]

  Since  $f'(x_i)>0 $ if  $\mathrm{I}_i=\mathrm{L}$ and  $f'(x_i)<0,$ if  $\mathrm{I}_i
  =\mathrm{R}$, then
   \[\text{sgn} \left(\prod_{j=1}^{i-1}f'(x_j)\right)=\text{sgn} \left(\mathrm{I}_1, \ldots, \mathrm{I}_{i-1},\mathrm{C}\right)\]
    thus, we obtain

\begin{equation}\label{ec1}
   \text{sgn} \left((f^p(x_i))''\right)=- \text{sgn} \left(\mathrm{I}_1, \ldots, \mathrm{I_{i-1}}, \mathrm{C} \right)
\end{equation}

So  if  $f^p$ has a maximum  at $x_i$ then $\text{sgn} \left((f^p(x_i))''\right)=-1$. Therefore,  from (\ref{ec1}) we obtain

\[-1=- \text{sgn} \left(\mathrm{I}_1, \ldots, \mathrm{I_{i-1}, \mathrm{C}} \right)\]
Consequently,  the $\mathrm{R}-$parity of the subsequence preceding $\mathrm{I}_i$ is even.

On the other hand,  if the   the $\mathrm{R}-$parity of the subsequence preceding  $\mathrm{I}_i$ is even then   from (\ref{ec1}) we have
\[\text{sgn} \left((f^p(x_i))''\right)=-1 \]
 and $f^p$ has a maximum  at $x_i$.

 Notice that  we can  analogously  prove  that   $f^p$ has a minimum  at $x_i$  if and only if the $\mathrm{R}-$parity of the subsequence preceding  $\mathrm{I}_i$ is odd.

\end{proof}

\begin{ex}
In figure \ref{f1} can be seen  a $3-$periodic supercycle with  symbolic sequence $\mathrm{RLC}$. The $\mathrm{R}-$parity of the subsequence preceding $x_1$ is even, and  $f^3$ has a maximum at this point. Meanwhile the  $\mathrm{R}-$parity of the subsequences preceding $x_2$ and $x_3$ are odd and $f^3$ has  minima at these points.   \end{ex}

Lemma \ref{lema3} allows us  to know whether  a box is a concave or a convex one, without deriving $f^h$. We need to know this information because depending on it {\large $\sigma$}$^{-1}_{s}$ or  $\overline{\text{\large$\sigma$}^{-1}_{s}}$ is used to jump from one point to another inside a box. Obviously  it takes less time to calculate a previous sequence that deriving $f^h$ twice, but if $h$ is big then calculating a preceding subsequence can take a long time too. However former lemmas can be exploit to get  a more refined information as the below proposition means:

 ``A box is convex (concave) if and only if the first time the box is visited a $c^*_{(1,-)}$ ($c^*_{(s,-)}$) point is reached''.  That is, the box is convex or concave depending on whether the first time you visit the box you reach the first or  the last boundary point. That in itself hides an unexpected link between analysis and topology in dynamical systems that it must be researched.

Before proving  the proposition we must take into account the following remark.
\begin{Remark}\label{R1} Let $O_{hs}=\{x_1, x_2, \ldots, x_{hs}\equiv \mathrm{C}\}$ be a $hs-$periodic supercycle of $f$, then  $(f^{hs})'(x_i)=0$ for each  $i=1,\ldots, hs$. As $f$ is an unimodal function with an unique critical point at $\mathrm{C}$, it is deduced that there exists  $\{x'_1, x'_2, \ldots, x'_{h}\equiv \mathrm{C}\}\subset O_{hs}$ such that $(f^h)'(x'_j)=0$ for each  $j=1,\ldots, h$. These points determine the   boxes according to definition \ref{def6}.

If we chose any point belonging to the central box after $h-1$ iterates it would generate a symbolic sequence coinciding with the $h-1$ first letters of the symbolic sequence of $O_{h}$, that also coincides with the $h-1$ first letters of the  symbolic sequence of $O_{hs}.$  That is so because each box of $O_{hs}$ is visited according to {\large $\sigma$}$^{-1}_{h}$, the permutation that determines the  supercycle  $O_h$ and its  symbolic sequence.\end{Remark}

\begin{prop}\label{p1}  Let $O_{hs}=\{x_1, x_2, \ldots, x_{hs}\equiv \mathrm{C}\}$ be a $hs-$periodic supercycle of $f$ and let  $x_r\in H^*_{\sigma^{-1}(r,h)}$ with  $1\leq r\leq h$, then
\begin{enumerate}
  \item [(i)] $H^*_{\sigma^{-1}(r,h)}$ is a convex box if and only if $x_r=c^*_{(1,\sigma^{-1}(r,h))}$
  \item  [(ii)] $H^*_{\sigma^{-1}(r,h)}$ is a concave  box if and only if $x_r=c^*_{(s,\sigma^{-1}(r,h))}$
\end{enumerate}
\end{prop}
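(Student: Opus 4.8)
The plan is to combine Lemma~\ref{lema1}, Lemma~\ref{lema2} and Lemma~\ref{lema3} to translate the topological notion of convex/concave box into the combinatorial notion of which boundary point is hit first. First I would recall that, by Lemma~\ref{lema1}, the point $x_r$ (with $1\le r\le h$) is a boundary point of its box $H^*_{\sigma^{-1}(r,h)}$, so $x_r$ is either $c^*_{(1,\sigma^{-1}(r,h))}$ or $c^*_{(s,\sigma^{-1}(r,h))}$; thus the two alternatives in (i) and (ii) are exhaustive and mutually exclusive, and it suffices to prove one implication in each, say that $x_r=c^*_{(1,\sigma^{-1}(r,h))}$ forces convexity and $x_r=c^*_{(s,\sigma^{-1}(r,h))}$ forces concavity (the converses then follow from exhaustiveness).

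The core of the argument is to apply Lemma~\ref{lema3} at the level $p=h$, but to the distinguished sub-supercycle $\{x'_1,\ldots,x'_h\equiv\mathrm{C}\}\subset O_{hs}$ described in Remark~\ref{R1}, whose points are exactly the critical points of $f^h$ and which carries the $h$-symbolic sequence $\mathrm{I}_1,\ldots,\mathrm{I}_{h-1},\mathrm{C}$ (the first $h-1$ letters of the symbolic sequence of $O_{hs}$, equivalently of $O_h$). Lemma~\ref{lema3} then says $f^h$ has a maximum at the point of this sub-orbit visited in position $r$ if and only if the $\mathrm{R}$-parity of $\mathrm{I}_1,\ldots,\mathrm{I}_{r-1}$ is even; by Definition~\ref{def6} this is precisely the statement that $H^*_{\sigma^{-1}(r,h)}$ is convex (resp. concave when the parity is odd). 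So it remains to show: the $\mathrm{R}$-parity of $\mathrm{I}_1,\ldots,\mathrm{I}_{r-1}$ is even if and only if $x_r=c^*_{(1,\sigma^{-1}(r,h))}$.

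That last equivalence is exactly what Lemma~\ref{lema2} delivers once we anchor it at the first box. Since $x_1=\mathrm{C}^*_{(1,hs)}=c^*_{(1,1)}=c^*_{(1,\sigma^{-1}(1,h))}$, the first box is entered at its $c^*_{(1,-)}$ boundary point. Taking $n=1$ and $m=r$ in Lemma~\ref{lema2}, the entry order from $H^*_{\sigma^{-1}(1,h)}$ to $H^*_{\sigma^{-1}(r,h)}$ is conserved iff the $\mathrm{R}$-parity of $\mathrm{I}_1,\ldots,\mathrm{I}_{r-1}$ is even; and ``conserved'' here means $f^{r-1}(c^*_{(1,\sigma^{-1}(1,h))})=c^*_{(1,\sigma^{-1}(r,h))}$, i.e. $x_r=c^*_{(1,\sigma^{-1}(r,h))}$, while ``reversed'' (odd parity) means $x_r=c^*_{(s,\sigma^{-1}(r,h))}$. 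Stringing these three equivalences together gives both (i) and (ii). I would finish by noting the degenerate case $r=h$: the central box $H^*_{\mathrm{C}}$ is the last visited, and since $x_h$ is a boundary point the same parity computation applies verbatim, so no separate treatment is needed.

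The main obstacle I anticipate is bookkeeping rather than depth: one must be careful that Lemma~\ref{lema3} is being invoked for the auxiliary $h$-orbit of critical points of $f^h$ sitting inside $O_{hs}$ (as legitimized by Remark~\ref{R1}), not for $O_h$ itself nor for $O_{hs}$, and that the index $r$ plays simultaneously the role of a cardinal-box index and of a visiting-order position via $\sigma^{-1}(r,h)$; conflating these would make the statement look circular. A second minor subtlety is checking that the hypothesis $1\le n<m\le h$ of Lemma~\ref{lema2} is genuinely satisfied with $n=1$ (it is, for every $r\ge 2$), and handling $r=1$ trivially since then the preceding subsequence is $\mathrm{C}$, of even $\mathrm{R}$-parity, matching $x_1=c^*_{(1,1)}$ and the fact that $H^*_1$ is convex.
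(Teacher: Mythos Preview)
Your proposal is correct and follows essentially the same route as the paper: both arguments chain Lemma~\ref{lema1} (so $x_r$ is a boundary point), Lemma~\ref{lema3} (convex/concave $\Leftrightarrow$ even/odd $\mathrm{R}$-parity of $\mathrm{I}_1,\ldots,\mathrm{I}_{r-1}$, via Remark~\ref{R1}), and Lemma~\ref{lema2} with $n=1$, $m=r$ anchored at $x_1=c^*_{(1,1)}$. The only cosmetic difference is that the paper applies Lemma~\ref{lema3} directly to the $h$-periodic supercycle $O_h$ and then invokes Remark~\ref{R1} to identify its symbolic sequence with the first $h-1$ letters of the $hs$-sequence, rather than to an auxiliary sub-orbit inside $O_{hs}$; your own caveat about bookkeeping already anticipates this point.
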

\begin{proof}
\begin{enumerate}
  \item [(i)] Let us suppose that  $H^*_{\sigma^{-1}(r,h)}$ is a convex box, then there  exists  some $x_k\in O_{hs}$ with $x_k\in H^*_{\sigma^{-1}(r,h)}$ such that
   $f^h$ has a maximum at $x_k$.

   Let  $O_{h}=\{x'_1,  \ldots, x'_{h}\equiv \mathrm{C}\}$  be the $h-$periodic  supercycle  of $f$, as $x_k\in H^*_{\sigma^{-1}(r,h)}$, according to remark \ref{R1} it must be $x_k=x'_r$. As there is a maximum at  $x_k=x'_r$  it yields, by using lemma \ref{lema3}, that the
    $\mathrm{R}-$parity of  $\mathrm{I}'_1, \ldots, \mathrm{I}'_{r-1}$ is even. Since  $\mathrm{I}'_1, \ldots, \mathrm{I}'_{r-1}$ coincides with the first $r-1$ symbols of the $hs-$symbolic sequence, that is, $\mathrm{I}_1, \ldots, \mathrm{I}_{r-1}$, then  the $\mathrm{R}-$parity of  $\mathrm{I}_1, \ldots, \mathrm{I}_{r-1}$ is also even. Therefore, it results, after applying lemma \ref{lema2}, that the entry order from $H^*_{\sigma^{-1}(1,h)}$ to  $H^*_{\sigma^{-1}(r,h)}$ is conserved. As  $x_1=c^*_{(1,1)}\in H^*_{\sigma^{-1}(1,h)}=H^*_1$ and the entry order is conserved then it is concluded that $x_r=c^*_{(1,\sigma^{-1}(r,h))}$.

The inverse implication can be  obtained by reversing  the process.
  \item [(ii)] The proof is exactly the same as before.
\end{enumerate}

\end{proof}

Notice we have never said $x'_r$ is a boundary point and, in fact, it is not. The idea of the proof is that every extremum of $f^h$ (the boxes) behaves as $f$. The composition ($s$ times) of this approximation  generates something similar to $f^s$ (the points inside the boxes) in the neighborhood  of the extrema (see figure  \ref{fighs}). Then we link an extremum of $f^h$ (at  $x'_r$) with a boundary point of that ``thing similar to $f^s$'' (around $x_r$). Obviously $x'_r$ and $x_r$ belong to the same box generated by the extremum of $f^h$ at $x'_r$; in fact, $\sigma^{-1}_{s}$ moves $x'_r$  to $x_r$ (maximum at $x'_r$) and $\overline{\sigma^{-1}_{s}}$ moves $x'_r$ to $x_r$ (minimum at $x'_r$) because $x'_r$ behaves as the critical point $\mathrm{C}$. Therefore instead of getting information from $x'_r$ (behaving as $\mathrm{C}$, an extremum), we get it from displacement of $x'_r$ (a boundary point) given by $\sigma^{-1}_{s}$.

\begin{ex} Let us pay attention to figures \ref{f3} and  \ref{figure4}. $H_1^*$ is a convex box because the first time this box is   visited the point $\mathrm{C}^*_{(1,12)}=c^*_{(1,1)}$ is reached.  Meanwhile $H_2^*$  and $H_3^*$ are concave  boxes because the first time these boxes are visited  the points $\mathrm{C}^*_{(12,12)}=c^*_{(4,3)}$  and $\mathrm{C}^*_{(8,12)}=c^*_{(4,2)}$ are respectively reached. \end{ex}
As we have said, the boxes, associated to extrema of $f^h$, can be either convex (maximum of $f^h$) or concave  (minimum of $f^h$). The permutation {\large {${\sigma}$}}$^{-1}_{s}$ determines the visiting order in convex boxes. To determine the visiting order in a concave box the permutation  {\large {${\sigma}$}}$^{-1}_{s}$ must be transformed by the same process that transforms a maximum into a minimum: a homotecy, as it can be seen in Appendix. According to this idea we have the conjugated permutation  as shown below.

\begin{defi} Let {\large {${\sigma}$}}$_{p}$ be the cardinal ordering permutation of the $p-$periodic supercycle  of $f$. We define the cardinal ordering conjugated permutation, denoted by  $\overline{{\text{\large ${\sigma}$}}_{p}}$, as \[
\text{\large {$\overline{{\sigma}_{p}}$}}=\begin{pmatrix}1 & \cdots & i & \cdots & p\\
\sigma(p,p)& \cdots & \sigma(p-i+1,p) & \cdots& \sigma(1,p)\end{pmatrix}\ \]
\end{defi}

\begin{defi} Let {\large {${\sigma}$}}$^{-1}_{p}$ be the visiting sequence of the $p$-periodic supercycle  $O_{p}$. We define the conjugated visiting sequence, denoted by  {$\overline{\text{ {\large{$\sigma$}}}^{-1}}_{p}$}, as \[
\overline{\text{\large {${\sigma}$}}^{-1}_{p}}=\begin{pmatrix}1 & \cdots & i & \cdots & p\\
p+1-\sigma^{-1}(1,p)& \cdots & p+1-\sigma^{-1}(i,p) & \cdots& p+1-\sigma^{-1}(p,p)\end{pmatrix}\ \]

\end{defi}

\begin{thm}\label{th1} Let $O_{h}$ and $O_{s}$ be  supercycles  with cardinal points $\{ \mathrm{C}_{(1,h)}^{*},\ldots,\mathrm{C}_{(h, h)}^{*}\}$ and $\{ \mathrm{C}_{(1,s)}^{*},\ldots,\mathrm{C}_{(s, s)}^{*}\}$ respectively. Let  {\large$\sigma$}$^{-1}_h$ and {\large$\sigma$}$^{-1}_s$ be  their visiting sequences and let  $m\in \N$  where \,$1\leq m \leq hs$ and  $m=nh+r, \ 1 \leq r \leq h.$ Then the visiting sequence of $O_{hs}$ is,

 \begin{enumerate}
  \item [i)] if the $\mathrm{R}-$parity of the subsequence preceding $\mathrm{I}_r$ of the $h-$symbolic sequence  is even
  \[ \sigma^{-1}(m,hs)=   (\sigma^{-1}(r,h)-1)s+ \sigma^{-1}(n+1,s)
 \]
  \item  [ii)] if the $\mathrm{R}-$parity of subsequence preceding $\mathrm{I}_r$ of the $h-$symbolic sequence  is odd
  \[ \sigma^{-1}(m,hs)= (\sigma^{-1}(r,h)-1)s+ s+1- \sigma^{-1}(n+1,s)
\]
\end{enumerate}
\end{thm}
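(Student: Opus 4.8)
The plan is to track a single point of $O_{hs}$ through the decomposition into boxes, separating the two independent motions: the motion \emph{between} boxes (governed by $\sigma_h^{-1}$) and the motion \emph{inside} a box (governed by $\sigma_s^{-1}$ or $\overline{\sigma_s^{-1}}$). Fix $m$ with $1\le m\le hs$ and write $m=nh+r$ with $1\le r\le h$, so $x_m=f^m(\mathrm C)$. First I would locate the box containing $x_m$: by Remark \ref{r2}, $x_m$ lies in $H^*_{\sigma^{-1}(r,h)}$, since $m\equiv r\pmod h$ in the sense $m=r+nh$. Hence the "box coordinate'' of $\sigma^{-1}(m,hs)$ must be $\sigma^{-1}(r,h)$, which accounts for the offset $(\sigma^{-1}(r,h)-1)s$ in both formulas: the cardinals of $H^*_{\sigma^{-1}(r,h)}$ occupy positions $(\sigma^{-1}(r,h)-1)s+1,\ldots,(\sigma^{-1}(r,h)-1)s+s$ in the global descending cardinality ordering of $O_{hs}$, by Definition \ref{def4}.

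Next I would identify which cardinal \emph{within} that box $x_m$ is. The idea is that the box $H^*_{\sigma^{-1}(r,h)}$ is visited at the $r$-th step, $(2r-1)$-th step in full count... more precisely, among the $s$ visits to this box (at iterates $r, r+h, r+2h,\ldots, r+(s-1)h$), the visit at iterate $m=r+nh$ is the $(n+1)$-th visit. So I must show two things: (a) the \emph{first} visit (at iterate $r$) lands on the first boundary point $c^*_{(1,\sigma^{-1}(r,h))}$ if the box is convex, and on the last boundary point $c^*_{(s,\sigma^{-1}(r,h))}$ if it is concave; (b) each subsequent visit moves the point according to $\sigma_s^{-1}$ in the convex case and $\overline{\sigma_s^{-1}}$ in the concave case. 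Claim (a) is exactly Proposition \ref{p1}. For claim (b), the point is that after $h$ iterates of $f$ one returns to the same box, and $f^h$ restricted to a neighborhood of that box behaves like a unimodal map with critical point $x'_r$ (the extremum of $f^h$ in that box), whose supercycle of the induced dynamics is $O_s$ up to the homothety discussed in the Appendix; thus successive visits realize $\sigma_s^{-1}$ for a convex (maximum) box and its conjugate $\overline{\sigma_s^{-1}}$ for a concave (minimum) box, by the very definition of the conjugated visiting sequence. Combining, the within-box position of the $(n+1)$-th visit is $\sigma_s^{-1}(n+1,s)$ in the convex case and $s+1-\sigma_s^{-1}(n+1,s)$ in the concave case.

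Finally I would tie the convex/concave dichotomy to the $\mathrm R$-parity stated in the theorem. By Proposition \ref{p1} together with Lemma \ref{lema3} (applied to $O_h$, whose symbolic sequence agrees with the first $h-1$ symbols of the $hs$-symbolic sequence, by Remark \ref{R1}), the box $H^*_{\sigma^{-1}(r,h)}$ is convex iff $f^h$ has a maximum at $x'_r$, iff the $\mathrm R$-parity of the subsequence preceding $\mathrm I_r$ of the $h$-symbolic sequence is even; and it is concave iff that parity is odd. Substituting into the two cases of the within-box computation and adding the box offset gives
\[
\sigma^{-1}(m,hs)=(\sigma^{-1}(r,h)-1)s+\sigma^{-1}(n+1,s)
\]
in the even case and
\[
\sigma^{-1}(m,hs)=(\sigma^{-1}(r,h)-1)s+s+1-\sigma^{-1}(n+1,s)
\]
in the odd case, as asserted.

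**Main obstacle.** The delicate step is claim (b): making rigorous the assertion that successive returns to a fixed box are governed by $\sigma_s^{-1}$ (resp. its conjugate). The clean way is not to invoke the approximate renormalization picture as such, but to argue directly on the cardinals: show by induction on the visit count that the ordered set of the $s$ points of $O_{hs}$ inside a convex box, read in visiting order, is the permutation $\sigma_s^{-1}$, using that (i) within a convex box $f^h$ is increasing-then-decreasing exactly as $f$ near $\mathrm C$, so the order relations among the $s$ points are preserved under $f^h$ in the same pattern that defines $O_s$, and (ii) the first visit anchors the correspondence by Proposition \ref{p1}. The concave case follows because reversing the role of maximum and minimum reverses all order relations, which is precisely the conjugation in the definition of $\overline{\sigma_s^{-1}}$. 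I expect the bookkeeping to reconcile "visit number $n+1$'' with the argument "$n+1$'' of $\sigma_s^{-1}$ to require care about the off-by-one between $m=nh+r$ and the count of returns, and about the boundary case $r=h$ (the central box, visited last), which should be checked separately or absorbed by the convention in Definition \ref{def2}(ii).
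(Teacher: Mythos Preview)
Your proposal is correct and follows essentially the same architecture as the paper's proof: locate $x_m$ in the box $H^*_{\sigma^{-1}(r,h)}$, determine via the $\mathrm R$-parity whether that box is convex or concave (hence which boundary point is hit first and which permutation, $\sigma_s^{-1}$ or $\overline{\sigma_s^{-1}}$, governs the $n$ subsequent in-box moves), and read off the global cardinal using Definition~\ref{def4}. The only cosmetic difference is the order of invocation: the paper goes through Lemma~\ref{lema2} (even $\mathrm R$-parity $\Rightarrow$ entry order conserved $\Rightarrow x_r=c^*_{(1,\sigma^{-1}(r,h))}$) and then applies Proposition~\ref{p1} to deduce convexity, whereas you go through Lemma~\ref{lema3} (even $\mathrm R$-parity $\Rightarrow$ maximum of $f^h$ $\Rightarrow$ convex) and then apply Proposition~\ref{p1} in the other direction to get the boundary point; since Proposition~\ref{p1} is an equivalence, the two routes are interchangeable. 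Your flagged ``claim~(b)'' is exactly the step the paper also treats as given by the construction of $O_{hs}$ (Definition~\ref{def3} and the discussion around it), so you are not missing anything the paper supplies.
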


\begin{proof}  To know the cardinal point that we are located after $m$ iterations from $\mathrm{C}$, we can write
$f^m(\mathrm{C})=f^{nh}(f^r(\mathrm{C}))$.\\
 As $f^r(\mathrm{C})=x_r\in H^*_{\sigma^{-1}(r,h)},$ using lemma \ref{lema1}, it yields $x_r$ is a boundary point, that is,
$x_r=c^*_{(1,\sigma^{-1}(r,h))}$ or $x_r=c^*_{(s,\sigma^{-1}(r,h))}$. We divide the proof in two steps
\begin{enumerate}
  \item [i)]   The  $\mathrm{R}-$parity of the subsequence preceding $\mathrm{I}_r$ of the $h-$symbolic sequence is assumed to be  even. Therefore, the $\mathrm{R}-$parity of the subsequence preceding to $\mathrm{I}_r$ of the $hs-$symbolic sequence  is also even. Then it followsm by using   lemma \ref{lema2},   that the entry  order from $H^*_{\sigma^{-1}(1,h)}$ to  $H^*_{\sigma^{-1}(r,h)}$ is conserved, as   $x_1=c^*_{(1,1)}$ it results  $x_r=c^*_{(1,\sigma^{-1}(r,h))}$, and applying  proposition \ref{p1}, it yields that $H^*_{\sigma^{-1}(r,h)}$ is a convex box. Given that $f^m(\mathrm{C})=f^{nh}(f^r(\mathrm{C}))=f^{nh}(x_r)$ it results that we have moved $n$ times inside the convex box $H^*_{\sigma^{-1}(r,h)}$ from $x_r=c^*_{(1,\sigma^{-1}(r,h))}$.
      After $n$ movements from $c^*_{(1,\sigma^{-1}(r,h))}$ in a convex box we will be located  at $c^*_{(\sigma^{-1}(n+1,s),\sigma^{-1}(r,h))}$ and, using definition \ref{def4}, it yields  $c^*_{(\sigma^{-1}(n+1,s),\sigma^{-1}(r,h))}= \mathrm{C}^*_{((\sigma^{-1}(r,h)-1)s+ \sigma^{-1}(n+1,s),hs)}$, so  we finally obtain
      \[\sigma^{-1}(m,hs)=(\sigma^{-1}(r,h)-1)s+ \sigma^{-1}(n+1,s)\]

\item  [ii)] The $\mathrm{R}-$parity of the  subsequence preceding $\mathrm{I}_r$ of the $h-$symbolic sequence is assumed to be odd. The proof runs as shown above, the reader should notice that now  $x_r=c^*_{(s,\sigma^{-1}(r,h))}$ because the $\mathrm{R}-$parity of the subsequence preceding $\mathrm{I}_r$ of the $hs-$symbolic sequence  is odd. The entry order is not conserved and $H^*_{\sigma^{-1}(r,h)}$ is a concave box (by
     proposition \ref{p1}).         The  movements inside $H^*_{\sigma^{-1}(r,h)}$  are ruled  by $\overline{\sigma^{-1}_s}$, so after $n$ movements from $c^*_{(s,\sigma^{-1}(r,h))}$ we will be located  at $c^*_{(\overline{\sigma^{-1}}(n+1,s),\sigma^{-1}(r,h))}$ and we obtain, by using definition \ref{def4},
      \[\sigma^{-1}(m,hs)=(\sigma^{-1}(r,h)-1)s+\overline{\sigma^{-1}}(n+1,s)\] Finally taking into account that $\overline{\sigma^{-1}}(n+1,s)=s+1-\sigma^{-1}(n+1,s)$ (see appendix) statement of the theorem is obtained.
\end{enumerate}

\end{proof}

\begin{ex}\label{ejem} We are going to calculate {\large{$\sigma$}}$_3\circ${\large{$\sigma$}}$_4$ $(h=3, s=4)$ to obtain {\large{$\sigma$}}$_{3\cdot 4}= ${\large{$\sigma$}}$_{12}$. According to figures \ref{f1} and \ref{fig2}
\[
\text{\large{$\sigma_{3}^{-1}$}}=\left(\begin{array}{ccc}
1 & 2 & 3\\
1 & 3 &2\end{array}\right)  \qquad \text{\large{$\sigma_{4}^{-1}$}}=\left(\begin{array}{cccc}
1 & 2 & 3&4\\
1 & 4& 3 &2\end{array}\right)  \] that give the symbolic sequences $\mathrm{RLC}$  and $\mathrm{RLLC}$ respectively (see \cite{Mar} for more details).
\begin{itemize}
  \item [a)] If $m=4$ then $m=1\cdot 3+1\equiv n\cdot h+r$. The subsequence preceding $\mathrm{I_r}=\mathrm{I_1}$ is $\mathrm{C}$, therefore the $\mathrm{R}-$parity is even, hence \[\sigma^{-1}(4,12)=(\sigma^{-1}(1,3)-1)4+ \sigma^{-1}(2,4)=4\]that is, we are located at $\mathrm{C}^*_{(4,12)}$ (see figure \ref{f3} and  \ref{figure4}.)

  \item [b)] If $m=5$ then $m=1\cdot 3+2\equiv n\cdot h+r$. The subsequence preceding $\mathrm{I_r}=\mathrm{I_2}$ is $\mathrm{R}$, therefore the $\mathrm{R}-$parity is odd, hence  \[\sigma^{-1}(5,12)=(\sigma^{-1}(2,3)-1)4+ 4+1-\sigma^{-1}(2,4)=9\]that is, we are located at $\mathrm{C}^*_{(9,12)}$ (see figure \ref{f3} and  \ref{figure4}), in fact it is just one movement apart from  $\mathrm{C}^*_{(4,12)},$  obtained when $m=4$.
\end{itemize}

  \end{ex}
\begin{thm}\label{th2} Let  the supercycles $O_{h}$ and $O_{s}$ have cardinal points $\{ \mathrm{C}_{(1,h)}^{*},\ldots,\mathrm{C}_{(h, h)}^{*}\}$ and $\{ \mathrm{C}_{(1,s)}^{*},\ldots,\mathrm{C}_{(s, s)}^{*}\}$ respectively. Let   {\large$\sigma_h$} and {\large$\sigma_s$} be their cardinal ordering permutations.  Then the number of iterates to reach  $\mathrm{C}^*_{(i,hs)}$, with $i=ns+r$, $1\leq i\leq hs$, and $1\leq r\leq s$ is given  by,
 \begin{enumerate}
  \item [i)] if the $\mathrm{R}-$parity of  the subsequence preceding  $\mathrm{I}_{n+1}$ of the $h$-symbolic sequence is even,
  \[\sigma(i,hs)=(\sigma(r,s)-1)h+ \sigma(n+1,h)\]
  \item  [ii)] if the $\mathrm{R}-$parity of  the subsequence preceding $\mathrm{I}_{n+1}$ of the $h$-symbolic sequence is odd,
  \[\sigma(i,hs)=(\overline{\sigma(r,s)}-1)h+ \sigma(n+1,h)\]
\end{enumerate}
\end{thm}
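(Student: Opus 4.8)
The plan is to prove Theorem \ref{th2} as the ``dual'' of Theorem \ref{th1}: there one is handed an iterate count and finds which cardinal is reached, while here one is handed a cardinal position $i$ and must count the iterates, so the architecture of the argument is the same, only traversed in the opposite direction. First I would locate the target point through Definition \ref{def4}: writing $i=ns+r$ with $1\le r\le s$ gives $\mathrm{C}^*_{(i,hs)}=c^*_{(r,n+1)}$, so the point in question is the $r$-th cardinal of the $(n+1)$-th cardinal box $H^*_{n+1}$. Hence $\sigma(i,hs)$ splits naturally as the number of iterates needed to enter $H^*_{n+1}$ for the first time, plus a multiple of $h$ accounting for the within-box returns needed to move from the boundary point first reached to $c^*_{(r,n+1)}$.

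For the first summand I would reason as follows. By Remark \ref{r2} the iterates at which the orbit lies in $H^*_{n+1}$ are exactly $\sigma(n+1,h)+kh$ for $k=0,\dots,s-1$; since $1\le\sigma(n+1,h)\le h$, the box is entered for the first time after $\sigma(n+1,h)$ iterates, and by Lemma \ref{lema1} the point $x_{\sigma(n+1,h)}$ reached then is a boundary point of $H^*_{n+1}$. Whether it is $c^*_{(1,n+1)}$ or $c^*_{(s,n+1)}$ depends on whether $H^*_{n+1}$ is convex or concave (Proposition \ref{p1}), and this is where the hypothesis enters: via Lemma \ref{lema3} applied to the $h$-periodic copy furnished by Remark \ref{R1}, the $\mathrm{R}$-parity condition translates into $H^*_{n+1}$ being convex in case (i) and concave in case (ii), so the boundary point first reached is $c^*_{(1,n+1)}$ in case (i) and $c^*_{(s,n+1)}$ in case (ii).

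For the second summand I would invert the within-box bookkeeping already established in the proof of Theorem \ref{th1}. In a convex box the $k$-th visit lands on $c^*_{(\sigma^{-1}(k,s),\,n+1)}$, so to reach $c^*_{(r,n+1)}$ I need the visit index $k$ with $\sigma^{-1}(k,s)=r$, that is $k=\sigma(r,s)$; adding $\sigma(n+1,h)$ iterates for the first visit and $h$ for each of the remaining $k-1$ visits gives $\sigma(i,hs)=\sigma(n+1,h)+(\sigma(r,s)-1)h$, which is statement (i). In a concave box the $k$-th visit lands on $c^*_{(\overline{\sigma^{-1}}(k,s),\,n+1)}$ with $\overline{\sigma^{-1}}(k,s)=s+1-\sigma^{-1}(k,s)$, so solving $\overline{\sigma^{-1}}(k,s)=r$ gives $k=\sigma(s+1-r,s)$, which by the definition of the conjugated permutation equals $\overline{\sigma(r,s)}$ (here I use that $\overline{\sigma_s}$ is the permutation inverse of $\overline{\sigma^{-1}_s}$, the appendix identity dual to the one invoked in Theorem \ref{th1}); this yields $\sigma(i,hs)=\sigma(n+1,h)+(\overline{\sigma(r,s)}-1)h$, which is statement (ii).

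The step I expect to be the main obstacle is the middle one, just as in Theorem \ref{th1}: pinning down with certainty which boundary point of $H^*_{n+1}$ is hit on the very first visit, and checking that the ``convex $\Leftrightarrow$ even parity'' dictionary coming from Lemma \ref{lema3}, Remark \ref{R1} and Proposition \ref{p1} is correctly aligned with the box that actually contains $c^*_{(r,n+1)}$; everything downstream is the same inversion of a permutation (visit index $\leftrightarrow$ cardinal position inside the box) that already powers Theorem \ref{th1}. One could alternatively try to obtain Theorem \ref{th2} by literally inverting the formula of Theorem \ref{th1}, but there the case split is indexed by the residue of the iterate count modulo $h$ rather than by $n+1$, so that re-indexing would have to be tracked with care; the direct argument above avoids it.
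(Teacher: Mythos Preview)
Your proposal is correct and follows essentially the same architecture as the paper's proof: both localize the target via Definition~\ref{def4} as $c^*_{(r,n+1)}\in H^*_{n+1}$, spend $\sigma(n+1,h)$ iterates to enter that box for the first time, and then count $h(\sigma(r,s)-1)$ (or its conjugate analogue) further iterates for the within-box returns. The only cosmetic difference is that the paper cites Lemma~\ref{lema2} directly to pin down the first-visited boundary point, whereas you route the same conclusion through Proposition~\ref{p1} together with Lemma~\ref{lema3} and Remark~\ref{R1}; since Proposition~\ref{p1} is itself built on Lemma~\ref{lema2}, this is the same argument with one extra layer of citation.
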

\begin{proof}
Let us suppose that the $\mathrm{R}-$parity of the $h-$sequence preceding $\mathrm{I}_{n+1}$ is even. From definition \ref{def4}
\[\mathrm{C}_{(i,hs)}^{*}=\mathrm{C}_{(ns+r,hs)}^{*}=c_{(r,n+1)}^{*}\]

Therefore $\mathrm{C}_{(i,h)}^{*}$ is in the $r-$th position  of $H^*_{n+1.}$

On one hand,  $\sigma(n+1,h)$ iterates are needed, starting from $\mathrm{C}$, to reach  the box $H^*_{n+1}$, furthermore we will be spotted in $c_{(1,n+1)}^{*}\in H^*_{n+1},$ according lemma \ref{lema2}.

On the other hand,  we need $\sigma(r,s)-1$ iterates inside the box $H^*_{n+1}$ to move from $c_{(1,n+1)}^{*}$ to $c_{(r,n+1)}^{*}$ (by definition of $\sigma(r,s)$ and of  $c_{(-,-)}^{*}$,  see the paragraph following definition \ref{def4}). But to move from $c_{(j,n+1)}^{*}$ to $c_{(j+1,n+1)}^{*}$ the $h$ boxes need to be visited before; therefore, we need $h (\sigma(r,s)-1)$ iterates to move from $c_{(1,n+1)}^{*}$ to $c_{(r,n+1)}^{*}$.

Finally, we obtain that the total number of iterates, to reach $c_{(r,n+1)}^{*}$ from $\mathrm{C}$ is
\[\sigma(i,hs)=(\sigma(r,s)-1)h+ \sigma(n+1,h)\]

Conversely if we suppose that the $\mathrm{R}-$parity of the $h-$sequence preceding $\mathrm{I}_{n+1}$ is odd, then the proof can be handled in much the same way, being the only difference  that  the box   $H^*_{n+1}$ is a concave one. We know that $H^*_{n+1}$ is a concave box by using the $\mathrm{R}-$parity   of the subsequence preceding  $\mathrm{I}_{n+1}$ of the $h-$ symbolic sequence. So we have to use $\overline{\sigma(r,s)}$ instead of $\sigma(r,s)$ to control  movements  inside the box, therefore it results
\[\sigma(i,hs)=(\overline{\sigma(r,s)}-1)h+ \sigma(n+1,h)\]

 The relation between ${\sigma(r,s)}$ and $\overline{\sigma(r,s)}$ can be seen in the appendix.

\end{proof}

\begin{ex}How many iterates are needed to reach $\mathrm{C}_{(9,3\cdot 4)}^{*}$?.  According to example \ref{ejem}, five iterates are needed. Let us check it: if $i=9=2\cdot 4+1\equiv n\cdot s+r$ then the $\mathrm{R}-$parity of the subsequence preceding $\mathrm{I_{n+1}}=\mathrm{I_3}$ of the $3-$symbolic sequence  is odd, hence \[\sigma(9,12)=(\overline{\sigma(1,4)}-1)3+ \sigma(3,3)=(\sigma(4,4)-1)3+ \sigma(3,3)=5\]as it was expected.
 \end{ex}

 Needless to say in the theorems \ref{th1} and \ref{th2} we have only considered  an iteration number less than or equal to $hs$, because full cycles are despised.

\section{Algorithms}
Theorems \ref{th1} and \ref{th2} can be easily written in an algorithmic manner. The goal is direct when we transform theorem \ref{th1} into  an algorithm, because as $m=n\cdot h+r,$ with $1\leq r \leq h$, if $m$ is increased one at a  time then $n$ keeps constant for $h$ iterates and only $r$ varies, therefore $\sigma^{-1}(n+1,s)$ keeps constant for $h$ times and only $\sigma^{-1}(r,h)$ varies. The same works in a similar way for theorem \ref{th2} with $i=n\cdot s+r$ $1\leq r \leq s$, where we work with $\sigma(n+1,h)$ and $\sigma(r,s)$.
\subsection{ Algorithm to obtain  {\large$\sigma^{-1}_{hs}$} from {\large$\sigma^{-1}_h$} and {\large$\sigma^{-1}_s$}}
\begin{enumerate}
  \item Write the $h-$symbolic sequence.
  \item Write $\sigma^{-1}(j,s)$ below every symbol of the $h-$symbolic sequence.
  \item Change $\sigma^{-1}(j,s)$ by $s+1-\sigma^{-1}(j,s)$ if the $R-$parity of the previous sequence of upper symbols is odd.
  \item Add $(\sigma^{-1}(1,h)-1)\cdot s$ to the first term,  $(\sigma^{-1}(2,h)-1)\cdot s$ to the second  term and so successively until you add  $(\sigma^{-1}(h,h)-1)\cdot s$ to the last term.
  \item Repeat consecutively the former sequence ``$ s$'' times, with ``$j$'' changing from $j=1$ to $j=s$.
\end{enumerate}

The sequence obtained in step five is the looked for permutation.

\begin{ex}\label{ejem1} We are going to obtain  {\large$\sigma^{-1}_{3\cdot 4}$} from {\large$\sigma^{-1}_3$} and {\large$\sigma^{-1}_4$}
\begin{enumerate}
  \item \(\mathrm{R} \qquad  \mathrm{L} \qquad \mathrm{C}\)
  \item \[\begin{array}{ccc}
           \mathrm{R} & \mathrm{L} & \mathrm{C} \\
           \sigma^{-1}(j,4) & \quad \sigma^{-1}(j,4) & \quad  \sigma^{-1}(j,4)
         \end{array}
  \]
  \item \[\begin{array}{ccc}

           \sigma^{-1}(j,4) & \quad  s+1-\sigma^{-1}(j,4) & \quad s+1-\sigma^{-1}(j,4)
         \end{array}
  \Longrightarrow \begin{array}{ccc}
                    \sigma^{-1}(j,4) & \ 5-\sigma^{-1}(j,4) & \ 5-\sigma^{-1}(j,4)
         \end{array}\]
  \item
    \begin{multline*}
  \begin{array}{ccc}
                    \sigma^{-1}(j,4) &  5-\sigma^{-1}(j,4) &  5-\sigma^{-1}(j,4)\\
                     +& + & + \\
(\sigma^{-1}(1,3)-1) \cdot 4 & \qquad  (\sigma^{-1}(2,3)-1) \cdot 4 & \qquad (\sigma^{-1}(3,3)-1) \cdot 4
         \end{array}\Longrightarrow \\ \\ \Longrightarrow \begin{array}{ccc}
                  \sigma^{-1}(j,4) &  \quad 13-\sigma^{-1}(j,4) &\quad   9-\sigma^{-1}(j,4)
         \end{array}
 \end{multline*}

  \item

  \begin{multline*}
   {\text{\tiny{$\sigma^{-1}(1,4)$}}}  \  {\text{\tiny{$13-\sigma^{-1}(1,4)$}}}  \   {\text{\tiny{$9-\sigma^{-1}(1,4)$}}} \ | {\text{\tiny{$\sigma^{-1}(2,4)$}}} \  {\text{\tiny{$13-\sigma^{-1}(2,4)$}}} \ {\text{\tiny{$9-\sigma^{-1}(2,4)$}}} |  \\ | \ {\text{\tiny{$\sigma^{-1}(3,4)$}}}  \ {\text{\tiny{$13-\sigma^{-1}(3,4)$}}} \  {\text{\tiny{$9-\sigma^{-1}(3,4)$}}} \ | {\text{\tiny{$\sigma^{-1}(4,4)$}}} \ {\text{\tiny{$13-\sigma^{-1}(4,4)$}}} \ {\text{\tiny{$9-\sigma^{-1}(4,4)$}}} \ \Longrightarrow \\
   \Longrightarrow  1 \quad 12 \quad 8\quad 4 \quad 9 \quad 5 \quad 3 \quad 10 \quad 6 \quad 2 \quad 11 \quad 7
  \end{multline*}


\end{enumerate}  It can be checked in figures \ref{f3} and  \ref{figure4} that we have obtained {\large$\sigma^{-1}$}$_{3\cdot 4}$ (the visiting sequence of $O_{3\cdot4}$)\end{ex}

\subsection{ Algorithm to obtain  {\large$\sigma_{hs}$} from {\large$\sigma_h$} and {\large$\sigma_s$}}
\begin{enumerate}
   \item Write ``$s$''  times $\sigma(j,h)$.
    \item Add $(\sigma(1,s)-1)\cdot h$ to the first term,  $(\sigma(2,s)-1)\cdot h$ to the second  term and so successively until you add  $(\sigma(s,s)-1)\cdot h$ to the last term
       \item Repeat consecutively the former sequence ``$h$'' times, with ``$j$'' changing from $j=1$ to $j=h$.
        \item If the $R-$parity of the subsequence preceding $I_j$ of the $h-$symbolic sequence  is odd then $\sigma(i,s)$ must change by $\overline{\sigma(i,s)}$ for $i=1,\ldots,s$.

\end{enumerate}

The obtained  sequence is the looked for permutation.

\begin{ex}We are going to obtain  {\large$\sigma$}$_{3\cdot 4}$ from {\large$\sigma$}$_3$ and {\large$\sigma$}$_4$
\begin{enumerate}
    \item $\begin{array}{cccc}
             \sigma(j,3)  &   \quad \sigma(j,3) &  \quad \sigma(j,3) &  \quad \sigma(j,3)
           \end{array}   $\\
  \item \[\begin{array}{cccc}
           \sigma(j,3) & \sigma(j,3) & \sigma(j,3)& \sigma(j,3) \\
           +&+&+&+\\(\sigma(1,4)-1)\cdot 3& \qquad(\sigma(2,4)-1)\cdot 3 & \qquad (\sigma(3,4)-1)\cdot 3& \qquad (\sigma(4,4)-1)\cdot 3
                  \end{array} \]

  \item
 \begin{multline*}
    \begin{array}{cccc}
           \sigma(1,3) & \sigma(1,3) & \sigma(1,3)& \sigma(1,3) \\
           +&+&+&+\\(\sigma(1,4)-1)\cdot 3& \qquad(\sigma(2,4)-1)\cdot 3 & \qquad (\sigma(3,4)-1)\cdot 3& \qquad (\sigma(4,4)-1)\cdot 3
                  \end{array} \\ \\ \\ \begin{array}{cccc}
           \sigma(2,3) & \sigma(2,3) & \sigma(2,3)& \sigma(2,3) \\
           +&+&+&+\\(\sigma(1,4)-1)\cdot 3& \qquad(\sigma(2,4)-1)\cdot 3 & \qquad (\sigma(3,4)-1)\cdot 3& \qquad (\sigma(4,4)-1)\cdot 3
                  \end{array} \\ \\ \\
                  \begin{array}{cccc}
           \sigma(3,3) & \sigma(3,3) & \sigma(3,3)& \sigma(3,3) \\
           +&+&+&+\\(\sigma(1,4)-1)\cdot 3& \qquad(\sigma(2,4)-1)\cdot 3 & \qquad (\sigma(3,4)-1)\cdot 3& \qquad (\sigma(4,4)-1)\cdot 3
                  \end{array}
 \end{multline*}

 \item
 \begin{multline*}
    \begin{array}{cccc}
           \ \sigma(1,3) & \sigma(1,3) & \sigma(1,3)& \sigma(1,3) \\
           +&+&+&+\\(\sigma(1,4)-1)\cdot 3& \qquad(\sigma(2,4)-1)\cdot 3 & \qquad (\sigma(3,4)-1)\cdot 3& \qquad (\sigma(4,4)-1)\cdot 3
                  \end{array} \\ \\ \\ \begin{array}{cccc}
           \sigma(2,3) & \sigma(2,3) & \sigma(2,3)& \sigma(2,3) \\
           +&+&+&+\\(\overline{\sigma(1,4)}-1)\cdot 3& \qquad(\overline{\sigma(2,4)}-1)\cdot 3 & \qquad (\overline{\sigma(3,4)}-1)\cdot 3& \qquad (\overline{\sigma(4,4)}-1)\cdot 3
                  \end{array} \\ \\ \\
                  \begin{array}{cccc}
           \sigma(3,3) & \sigma(3,3) & \sigma(3,3)& \sigma(3,3) \\
           +&+&+&+\\(\overline{\sigma(1,4)}-1)\cdot 3& \qquad(\overline{\sigma(2,4)}-1)\cdot 3 & \qquad (\overline{\sigma(3,4)}-1)\cdot 3& \qquad (\overline{\sigma(4,4)}-1)\cdot 3 \end{array}\\
                      \Longrightarrow   \begin{array}{cccc}
           1 & \quad 10 & \quad 7& \quad 4 \\
                             \end{array} \  \qquad  \ \begin{array}{cccc}
           6& \quad  9 & \quad 12& \quad 3                           \end{array} \ \qquad  \
                \begin{array}{cccc}
           5 &\quad  8 & \quad 11& \quad 12
           \end{array}\\
                       \end{multline*}

\end{enumerate}  It can be checked  that  the  obtained permutation {\large$\sigma$}$_{3\cdot 4}$  is the inverse permutation of {\large$\sigma^{-1}$}$_{3\cdot 4}$ obtained in example \ref{ejem1}.  Observe that the added sequence $\sigma(i,s)$ is always the same or a mere reflection of it (see appendix) so the algorithm is   very fast to run.
\end{ex}

\section{Appendix}
\subsection{Meaning of the conjugation for permutations {\large$\sigma_p$} and {\large$\sigma^{-1}_p$}}

Let us consider two unimodal functions, with a maximum and a minimum respectively, such that they have the same orbits.   It is well known that their symbolic sequences are conjugated by the change $\mathrm{R}\longleftrightarrow \mathrm{L}$. The conjugation can be seen as a homotecy that transforms the maxima into a minima and viceversa. If we pay attention to figure \ref{figapendice} we can see as the sequence $\mathrm{RLC}$ (in the maximum) is transform into $\mathrm{LRC}$ (in the minimum). We can interprete geometrically the homotecy as a reflection about the  dashed line. We want to know  how the reflection affects to  {\large$\sigma_p$} and {\large$\sigma^{-1}_p$}.

\begin{figure}
\includegraphics[width=0.60\textwidth]{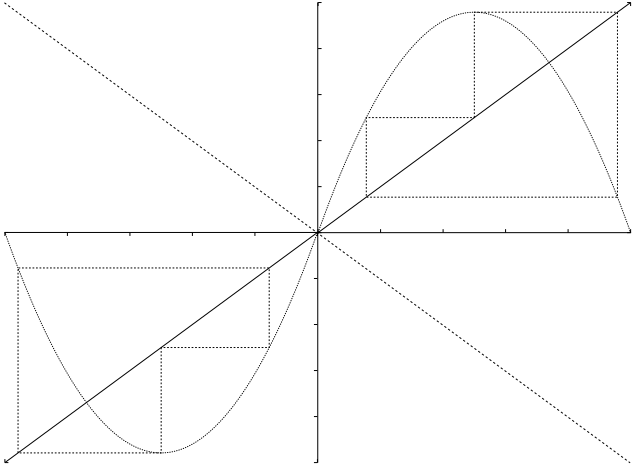}
\caption{\label{figapendice}Orbit, with symbolic sequence $\mathrm{RLC}$, in the first quadrant is transformed into an orbit with symbolic sequence $\mathrm{LRC}$ in the third quadrant by means of a homotecy. The homotecy is performed as a reflection about the dashed line.}
\end{figure}

\begin{enumerate}
  \item[-]   The permutation
  \begin{equation}\label{ec2}
\text{\large {${\sigma}_{p}$}}=\begin{pmatrix}1 & \cdots & i & \cdots & p\\
\sigma(1,p)& \cdots & \sigma(i,p) & \cdots& \sigma(p,p)\end{pmatrix}\
  \end{equation} is defined by $f^{\sigma(i,p)}=C^*_i$ (see \cite{Mar}) therefore, its reflection is given by

  \begin{equation}\label{ec3}
\text{\large {$\overline{{\sigma}_{p}}$}}=\begin{pmatrix}1 & \cdots & i & \cdots & p\\
\sigma(p,p)& \cdots & \sigma(p-i+1,p) & \cdots& \sigma(1,p)\end{pmatrix}\
  \end{equation}
  that is,

  \[\left(
      \begin{array}{cccc}
        0 & \cdots &0& 1 \\
        0 & \ldots &1& 0 \\
        \vdots & \ddots &\vdots& 0 \\
        1&\ldots& \ldots&0
      \end{array}
    \right)\left(
             \begin{array}{c}
               \sigma(1,p) \\
               \sigma(2,p) \\
               \vdots \\
               \sigma(p,p) \\
             \end{array}
           \right)=\left(
             \begin{array}{c}
               \sigma(p,p) \\
               \sigma(p-1,p) \\
               \vdots \\
               \sigma(1,p) \\
             \end{array}
           \right)
  \] where the matrix  $\left(
      \begin{array}{cccc}
        0 & \cdots &0& 1 \\
        0 & \ldots &1& 0 \\
        \vdots & \ddots &\vdots& \vdots \\
        1&\ldots& \ldots&0
      \end{array}
    \right)$ accomplishes the reflection.

  \item[-]  The permutation {\large$\sigma^{-1}_p$} determines how the $p-$periodic orbit is visited (see \cite{Mar}), that is, what cardinal points are visited one after another. The reflection of {\large$\sigma^{-1}_p$} is {\large$\overline{\sigma^{-1}_p}$} that means the orbit is gone through in reverse order. Therefore one of the permutations produces a shift which is cancelled by the other, hence

      \[\sigma^{-1}(i,p)+ \overline{\sigma^{-1}(i,p)}=p+1\]

      {\large$\sigma^{-1}_p$}  and {\large$\overline{\sigma^{-1}_p}$} are deduced from (\ref{ec2}) and (\ref{ec3})
\end{enumerate}


\section{Conclusions}

To understand the language of chaos we need to understand its grammar. A grammar, a set of rules, that determines how to combine some structures with others to generate new ones. In nonlinear dynamics basic structures are orbits, that are characterized by their cardinal ordering permutations (COP) ~\cite{Mar}; whereas, grammar is the COP composition rules.

 These COP composition rules, unknown until  now, have been stated  in this paper  in two theorems. Such composition rules have two importants and immediates implications:
 \begin{itemize}
   \item [(i)]  Given the COP of a $h-$periodic and $s-$periodic orbit then the COP of $hs-$periodic orbit is deduced, that is, the relative position of all points belonging to the $hs-$periodic orbit and its visiting order. Furthermore using the composition law in a recursive way it is possible to obtain COP for $h^{a}s^b-$periodic orbits $a,b\in \mathbb{Z^+}$, that is, with a couple of COP we can determine infinitely many COP and from them to fully understand  the associated orbits.
   \item [(ii)] As COP for periodic doubling cascade is known \cite{Mar} we would immediately obtain COP for both periodic doubling cascade and saddle-node bifurcation cascade \cite{Mar1, Mar2} with any arbitrary  basic period $(p\cdot 2^k)$.
 \end{itemize}

 Known the points $(i)$ and $(ii)$ all underlying periodic structures in an unimodal function $f$ are determined, hence the importance of the theorems \ref{th1} and \ref{th2}.

We want to point out that the COP for the period doubling cascade can be obtained by using the theorems proven in this paper. It is enough considering $\sigma_h=\sigma_s \ (h=s=2)$ to obtain $\sigma_{h.s}=\sigma_{2\cdot 2}$
By using recursively this process  we get the COP for the $2^n-$periodic orbit of the cascade. No matter whether we take $h=2^{(n-1)}$ and $s=2$ or $h=2$ and $s=2^{(n-1)};$ in this case the process is absolutely symmetric.
But if we use the second one we get a new proof of theorem 1 in \cite{Mar}.

 To prove the theorems we  needed to know if some points of $h-$periodic orbit were associated with maxima or minima of $f^h$. Since it would be wholly inoperative to get that information by deriving $f^h$ when $h$ is big, two steps were carried on: a) to provide this information implicitly in the theorems, b) to obtain this information in an simple way, even if $h$ is very big (proposition \ref{p1}). The solutions to these problems play a pivotal role in proving those  theorems. We must point out that analytical information has been obtained from topological one (what authors named ``subsequence preceding''), establishing a link between two apparently unconnected fields, that must be studied more carefully. It is well known that, when two apparently unconnected mathematical fields turn out to describe the same thing, but from two different points of view, the problems they address are very largely simplified.

--------------------------------------------


\begin{thebibliography}{10}
\bibitem{Hao} \textsc{ Hao Bai-Lin }: \textit{Elementary Symbolic dynamicss}. World Scientific 1989.
\bibitem{metro} \textsc{ N. Metroplois, M. L. Stein, and P. R. Stein}: \textit{On Finite Limit Sets for Transformations on the Unit Interval}, J. Combinatorial Theory \textbf{15} (1973) 25-44
 \bibitem{De} \textsc{B. Derrida, A.  Gervois, Y. Pomeau }: \textit{Iteration of endomorphisms on the rael axes and representation of numbers}, L'I.H.P. section A   \textbf{29} 3, (1978) 305-356
\bibitem{Mar} \textsc{ J. San Martín, Mª José Moscoso, A. González Gómez}: \textit{The universal cardinal ordering of fixed points}, Chaos, Solitons \& Fractals, \textbf{42} (2009) 1996-2007
  \bibitem{Fei1} \textsc{M. J. Feigenbaum}: \textit{Quantitative Universality for a Class of Nonlinear
Transformations}, J. Stat. Phys. \textbf{19} (1978) 25-52

\bibitem{Fei2} \textsc{M. J. Feigenbaum}: \textit{The Universality Metric Properties for  Nonlinear
Transformations}, J. Stat. Phys. \textbf{21} (1979) 669-706
 \bibitem{Mil} \textsc{J. Milnor, W. Thurston}: \textit{On iterated maps of the interval}, Dinamical systems. Lecture notes in math, vol. 1342. Berlin: Springer;  (1988) 465-563.

 \bibitem{Gil} \textsc{R. Gilmor, M. Lefranc}: \textit{The topology of chaos},  New York: Willye;  (2002).

 \bibitem{Mar1} \textsc{ J. San Martín, D. Rodríguez-Pérez}: \textit{Intermittency cascade}, Chaos, Solitons \& Fractals, \textbf{32}(2) (2007) 816-831
      \bibitem{Mar2} \textsc{ J. San Martín, D. Rodríguez-Pérez}: \textit{Conjugation of cascades}, Chaos, Solitons \& Fractals, \textbf{39}(2) (2009) 666-681
 \bibitem{Glen} \textsc{Glen R. Hall}: \textit{Some examples of permutations modellings area preserving monotone twist map}, Physica D: Nonlinear Phenomena, 28  (1978) 393-400


\bibitem{A} \textsc{J. M. Amigó, M. B. Kennel, L.  Kocarev }: \textit{The permutations entropy rate equials the metric entropy rate for ergodic information sources and ergodic dynamical systems}, Physica D 210(2005) 77-95.
    \bibitem{Am} \textsc{J. M. Amigó, L.  Kocarev, I. Tomovski }: \textit{Discrete entropy }, 228(2007) 77-85
    \bibitem{Ami} \textsc{J. M. Amigó, M. B. Kennel}: \textit{Topological permutations entropy }, Physica D 231(2007) 137-142.
  \bibitem{Ban} \textsc{C. Bandt, G. Keller, B.  Pompe }: \textit{Entropy of interval maps via permutations}, Nonlinearity  15(2002) 1595-1602


     \bibitem{Amig} \textsc{J. M. Amigó, M. B. Kennel}: \textit{Forbbiden ordinal patterns in higher dimensional dynamics}, Physica D 237(2008) 2893-2899.
           \bibitem{Amigo} \textsc{J. M. Amigó, L.  Kocarev, J. Szczepanski }: \textit{Order pattern and chaos}, Physics Letters A 355(2006) 27-31.






 \bibitem{Z} \textsc{L. Zunino, M. Zanin, B. M. Tabak, D.G. Pérez}: \textit{Forbbiden  patterns, permutation entropy and stock market inefficiency}, Physica A 388(2009) 2854-2864.


\end{thebibliography}
\end{document}